\def\rset{\mathbb R}
\def\zset{\mathbb Z}
\def\eqsp{\;}
 \newcommand{\pscal}[2]{\left\langle#1,#2\right\rangle}
\newcommand{\eqdef}{\ensuremath{\stackrel{\mathrm{def}}{=}}}
\def\F{\mathcal{F}} 
\def\e{\mathcal{E}}
\def\M{\mathcal{M}}
\def\H{\mathcal{H}}
\newcommandx\sequence[3][2=t,3=\zset]
\def\PP{\mathbb{P}} 
\newcommand{\CPP}[3][]
{\ifthenelse{\equal{#1}{}}{{\mathbb P}\left(\left. #2 \, \right| #3 \right)}{{\mathbb P}_{#1}\left(\left. #2 \, \right | #3 \right)}}
\def\PE{\mathbb{E}} 
\newcommand{\CPE}[3][]
{\ifthenelse{\equal{#1}{}}{{\mathbb E}\left[\left. #2 \, \right| #3 \right]}{{\mathbb E}_{#1}\left[\left. #2 \, \right | #3 \right]}}
\def\Cset{\mathcal{C}} 
\newcommand{\normfro}[1]{\left\Vert#1\right\Vert_{\textsf{F}}}
\def\Prox{\operatorname{Prox}}
\theoremstyle{plain}
\newtheorem{theorem}{Theorem}
\newtheorem{assumption}{H\hspace{-3pt}}
\newaliascnt{proposition}{theorem}
\newaliascnt{lemma}{theorem}
\newtheorem{lemma}[lemma]{Lemma}
\newaliascnt{corollary}{theorem}
\theoremstyle{definition}
\newaliascnt{definition}{theorem}
\newtheorem{algorithm}{Algorithm}
\newaliascnt{remark}{theorem}
\newtheorem{remark}[remark]{Remark}
\newaliascnt{example}{theorem}
\def\rmd{\mathrm{d}}
\def\1{\mathbbm{1}}
\begin{document}

\title[Change-point computation for large graphical models]{A scalable algorithm for Gaussian graphical models with change-points}\thanks{This work is partially supported by the NSF grant DMS 1513040}

\author{Yves Atchad\'e}\thanks{ Y. Atchad\'e: University of Michigan, 1085 South University, Ann Arbor,
  48109, MI, United States. {\em E-mail address:} yvesa@umich.edu}
\author{Leland Bybee}\thanks{ L. Bybee: University of Michigan, 1085 South University, Ann Arbor,
  48109, MI, United States. {\em E-mail address:} lelandb@umich.edu}

\subjclass[2010]{62F15, 62Jxx}

\keywords{Change-points, Gaussian graphical models, proximal gradient, simulated Annealing, Stochastic Optimization}

\maketitle
\begin{center} (\today) \end{center}

\begin{abstract}
Graphical models with change-points are computationally challenging to fit, particularly in cases where the number of observation points and the number of nodes in the graph are large.  Focusing on Gaussian graphical models, we introduce an approximate majorize-minimize (MM) algorithm that can be useful for computing change-points in large graphical models. The proposed algorithm is an order of magnitude faster than a brute force search. Under some regularity conditions on the data generating process, we show that with high probability, the algorithm converges to a value that is within statistical error of the true change-point. A fast implementation of the algorithm using Markov Chain Monte Carlo is also introduced. The performances of the
proposed algorithms are evaluated on synthetic data sets and the algorithm is also used to analyze structural  changes in the S\&P 500 over the period 2000-2016. 
\end{abstract}

\setcounter{secnumdepth}{3}

\section{Introduction}\label{sec:intro}
Networks are fundamental structures that are commonly used to describe interactions between sets of actors or nodes. In many applications, the behaviors of the actors are observed over time and one is interested in recovering the underlying network connecting these actors. High-dimensional versions of this problem where the number of actors is large (compared to the number of time points) is of special interest. In the statistics and machine learning literature, this problem is typically framed as fitting large graphical models with sparse parameters, and significant progress has been made recently, both in terms of the statistical theory (\cite{meinshausen06,yuanetlin07,barnejeeetal08,ravikumaretal11,hastie:etal:15}), and practical algorithms (\cite{friedmanetal08,hoefling09,atchade:prox14}). 

In many problems arising in areas such as biology, finance, and political sciences, it is well-accepted that the underlying networks of interest are not static, but can undergo abrupt changes over time. Graphical models with change-points (or piecewise constant graphical models) are simple, yet powerful models that are particularly well-suited for such problems. 
However, despite their conceptual simplicity, these  models are computationally challenging to fit. For instance a full grid search approach to locate a single change-point is a Gaussian graphical model with a \textsf{lasso} penalty (\textsf{glasso}) requires solving $O(T)$ \textsf{glasso} sub-problems, where $T$ is the number of time points. Most algorithms for the \textsf{glasso} problem scale  like $O(p^3)$ or worst\footnote{Furthermore the constant in the big-O is typically problem dependent and can be large}, where $p$ is the number of nodes. Hence when $p$ and $T$ are large, fitting a high-dimensional Gaussian graphical model with a single change-point has a taxing computational cost that scales at least as $O(Tp^3)$.

The literature addressing the computational aspects of change-point models is rather sparse. A large portion of change-point detection procedures are based on cumulative sums (CUSUM) or similar statistic monitoring approaches (\cite{levy:leduc:09,chen:zhang:15,cho:fritz:15} and the references therein). For simple enough statistics, these  change-point detection procedures can be efficiently implemented, and  the computational difficulty aforementioned can be avoided. However in problems where one wishes to detect structural changes in large networks, a CUSUM-based or a statistic-based approach can be difficult to employ, since it requires knowledge of the pertinent statistics to monitor. Furthermore the estimation of the change-point as well as the network structure before and after the change-point can provide new insight in the underlying phenomenon driving the changes. Hence CUSUM-based approaches may not be appropriate in applications where the main driving forces of the network changes are poorly understood, and/or are of prime interest.  In \cite{aue:etal:09} the author proposed a methodology to detect changes in the covariance structures of multivariate time-series. However their methodology is intractable in the high-dimensional setting considered in this paper. 

Specific works addressing computational issues in model-based change-point estimation include \cite{sandipan:14, leonardi:buhlmann:16}. In \cite{sandipan:14} the authors considered a discrete graphical model with change-point and proposed a two-steps algorithm for computation. However the success of their algorithm depends crucially on the choice of the coarse and refined grids, and there is limited insight on how to choose these. A related work is  \cite{leonardi:buhlmann:16} where the authors considered a high-dimensional linear regression model with change-points and proposed a dynamic programming approach to compute the change points. In the case of a single change-point their algorithm corresponds to the brute force (full-grid search) approach mentioned above.

In this work we propose an approximate majorize-minimize (MM) algorithm for fitting piecewise constant high-dimensional models. The algorithm can be applied more broadly. However to focus the idea we limit our discuss to Gaussian graphical models with an elastic net penalty. In this specific setting, the algorithm takes the form of a block update algorithm that alternates between a proximal gradient update of the graphical model parameters followed by a line search of the change-point. The proposed algorithm only solves for a single change-point. We extend it to multiple  change-points by binary segmentation. We study the convergence of the algorithm and show under some regularity conditions on the data generating mechanism that the algorithm is stable, and produces values in the vicinity of the true change-point (under the assumption that one such true change-point exists).

Each iteration of the proposed algorithm has a computational cost of $O(Tp^2 + p^3)$. Although this cost is one order of magnitude smaller than the $O(Tp^3)$ cost of the brute force approach, it can  still be large when $p$ and $T$ are both large. As a solution we propose a stochastic version of the algorithm where the line search performed to update the change-point is replaced by a Markov Chain Monte Carlo (MCMC)-based simulated annealing. The simulated annealing update is cheap ($O(p^2)$) and is used as a stochastic approximation of the full line search. We show by simulation that the stochastic algorithm behaves remarkably well, and as expected outperforms the deterministic algorithm is terms of computing time.

The paper is organized as follows. Section \ref{sec:method} contains a presentation of the Gaussian graphical model with change-points, followed by a details presentation of the proposed algorithms. We performed extensive numerical experiments to investigate the behavior of the proposed algorithms. We also use the algorithm to analyze structural  changes in the S\&P 500 over the period 2000-2016. The results are reported in Section \ref{sec:num:exp}. We gather some of the technical proofs in Section \ref{sec:proofs}.

We end this introduction with some notation that we shall used throughout the paper. We denote $\M_p$ the set of all symmetric elements of $\rset^{p\times p}$ equipped with its Frobenius norm $\normfro{\cdot}$ and associated inner product 
\[\pscal{A}{B}_{\textsf{F}}\eqdef \sum_{1\leq i\leq j\leq p}A_{ij}B_{ij}.\]
We denote $\M_p^+$ the subset of $\M_p$ of positive definite elements. For $0<a<A\leq +\infty$, let $\M_p^+(a,A)$ denote the subset of $\M_p^+$ of matrices $\theta$ such that $\lambda_{\textsf{min}}(\theta)\geq a$, and $\lambda_{\textsf{max}}(\theta)\leq A$, where $\lambda_{\textsf{min}}(M)$ (resp. $\lambda_{\textsf{max}}(M)$) denotes the smallest eigenvalue (resp. the largest eigenvalue) of $M$.
 
If $u\in\rset^p$, and $\theta\in\rset^{p\times p}$ for some integer $p\geq 1$, we denote $\|u\|_2$ the usual Euclidean norm of $u$, and $\|\theta\|_2$ the spectral norm (operator norm) of $\theta$.

\section{Fitting Gaussian Graphical models with a single change-point}\label{sec:method}
Let $\{X^{(t)},\;1\leq t\leq T\}$ be a sequence of  $p$-dimensional random vectors. The grid over which the change-points are searched is denoted $\mathcal{T}\eqdef\{n_0,\ldots,T-n_0\}$, for some integer $1\leq n_0<T$. We define
\[S_1(\tau)\eqdef\frac{1}{\tau}\sum_{t=1}^\tau X^{(t)}X^{(t)'}, \;\;\; S_2(\tau) \eqdef\frac{1}{T-\tau}\sum_{t=\tau+1}^T  X^{(t)}X^{(t)'},\;\;\tau\in\mathcal{T}.\]
 We define the regularization function
\[\wp(\theta) \eqdef \alpha\|\theta\|_1 + \frac{1-\alpha}{2}\normfro{\theta}^2,\;\theta\in\M_p,\]
where $\alpha\in[0,1)$ is a given constant, and $\|\theta\|_1 \eqdef\sum_{i\leq j}^p|\theta_{ij}|$. Then we define 
\[g_{1,\tau}(\theta) =\left\{\begin{array}{ll}\frac{1}{2}\frac{\tau}{T}\left[-\log\det(\theta)+\textsf{Tr}(\theta S_1(\tau))\right] & \mbox{ if } \theta\in\M_p^+,\\ +\infty & \mbox{otherwise},\end{array}\right.,\;\;\tau\in\mathcal{T},\]
where $\textsf{Tr}(A)$ (resp. $\det(A)$) denotes the trace (resp. the determinant) of $A$, and
\[g_{2,\tau}(\theta) =\left\{\begin{array}{ll}\frac{1}{2}\left(1-\frac{\tau}{T}\right)\left[-\log\det(\theta) +\textsf{Tr}(\theta S_2(\tau))\right] & \mbox{ if } \theta\in\M_p^+,\\ +\infty & \mbox{otherwise},\end{array}\right.,\;\;\tau\in\mathcal{T}.\]
For $j\in\{1,2\}$, we set
\begin{equation}\label{glasso_1}\hat\theta_{j,\tau} \eqdef \textsf{Argmin }_{\vartheta\in\M_p^+}\left[g_{j,\tau}(\vartheta) + \lambda_{j,\tau}\wp(\vartheta)\right],\end{equation}
for regularization parameters $\lambda_{1,\tau}>0,\lambda_{2,\tau}>0$, that we assume fixed throughout. We consider the problem of computing the change point estimate $\hat\tau$ defined as
\begin{equation}\label{est:2}
\hat\tau=\textsf{Argmin }_{\tau\in\mathcal{T}}\left[g_{1,\tau}(\hat\theta_{1,\tau}) + \lambda_{1,\tau}\wp(\hat\theta_{1,\tau}) + g_{2,\tau}(\hat\theta_{2,\tau}) + \lambda_{2,\tau}\wp(\hat\theta_{2,\tau})\right].
\end{equation}
If the minimization problem in (\ref{est:2}) has more than one solution, then $\hat\tau$ denotes any one of these solutions. The quantity $\hat\tau$ is the maximum likelihood estimate of a change point $\tau$ in the model which assumes that  $X^{(1)},\ldots,X^{(\tau)}$ are independent with common distribution $\textbf{N}(0,\theta_{1}^{-1})$, and $X^{(\tau + 1)},\ldots,X^{(T)}$ are independent with common distribution $\textbf{N}(0,\theta_{2}^{-1})$, for an unknown change-point $\tau$, and unknown precision matrices $\theta_{1}\neq \theta_{2}$.

The problem of computing the graphical lasso (\textsf{glasso}) estimators $\hat\theta_{j,\tau}$ in (\ref{glasso_1}) has received a lot of attention in the literature, and several efficient algorithms have been developed for this purpose (see for instance \cite{atchade:etal:15} and the references therein). Hence in principle, using any of these available \textsf{glasso} algorithms, the change-point problem in (\ref{est:2}) can be solved by solving $T-2n_0+1=O(T)$ \textsf{glasso} sub-problems. However this brute force approach can be very time-consuming in cases where $p$ and $T$ are large. For instance, one of the  most cost-efficient algorithm for solving the \textsf{glasso} problem in high-dimensional cases is the standard proximal gradient algorithm (\cite{bala:etal:12,atchade:etal:15}), which has a computational cost of $O(p^3\textsf{cond}(\hat\theta)^2\log(1/\delta))$ to deliver a $\delta$-accurate solution  (that is $\|\theta-\hat\theta\|_{\textsf{F}}\leq\delta$), where $\textsf{cond}(A)$ denotes the condition number of $A$, that is the ratio of the largest eigenvalue over the smallest eigenvalue of $A$. Hence when $p$ and $T$ are large the computational cost of the brute force approach for computing (\ref{est:2}) is of order $O\left(Tp^3\textsf{cond}(\hat\theta_{j,\tau})^2\log(1/\delta)\right)$, which can become prohibitively large. 

We propose an algorithm that we show has a better computational complexity. To motivate the algorithm we first introduce a majorize-minimize (MM) algorithm for solving (\ref{est:2}). We refer the reader to \cite{wu:lange:2010} for a general introduction to MM algorithms. Let
\[G(t)\eqdef g_{1,t}(\hat\theta_{1,t}) + \lambda_{1,t}\wp(\hat\theta_{1,t}) + g_{2,t}(\hat\theta_{2,t}) + \lambda_{2,\tau}\wp(\hat\theta_{2,t}),\;\;t\in\mathcal{T}\] 
denote the objective function of the minimization problem in (\ref{est:2}). For $\theta_1,\theta_2\in\M_p$, we also define
\begin{equation}\label{def:H}
\mathcal{H}(\tau\vert \theta_1,\theta_2) \eqdef  g_{1,\tau}(\theta_1) +\lambda_{1,\tau}\wp(\theta_1) + g_{2,\tau}(\theta_2) +\lambda_{2,\tau}\wp(\theta_2),\;\;\tau\in\mathcal{T}.\end{equation}
Instead of the brute force approach that requires solving (\ref{glasso_1}) for each value $\tau\in\mathcal{T}$, consider the following algorithm.

\begin{algorithm}[MM algorithm]\label{algo:MM}\;\; Pick $\tau^{(0)}\in\mathcal{T}$, and for $k=1,\ldots,K$, repeat the following steps.
\begin{enumerate}
\item Given $\tau^{(k-1)}\in \mathcal{T}$, compute $\hat\theta_{1,\tau^{(k-1)}}$ and $\hat\theta_{2,\tau^{(k-1)}}$, and minimize the function $\mathcal{H}(t\vert \hat\theta_{1,\tau^{(k-1)}}, \hat\theta_{2,\tau^{(k-1)}})$ to get $\tau^{(k)}$:
\[\tau^{(k)}  = \textsf{Argmin}_{t\in\mathcal{T}}\; \mathcal{H}(t\vert \hat\theta_{1,\tau^{(k-1)}}, \hat\theta_{2,\tau^{(k-1)}}).\]
\end{enumerate}
\begin{flushright}
$\square$
\end{flushright}
\end{algorithm}
\medskip

By definition of $\hat\theta_{j,\tau}$ in (\ref{glasso_1}), we have $G(t)\leq \mathcal{H}(t\vert \hat\theta_{1,\tau^{(k-1)}}, \hat\theta_{2,\tau^{(k-1)}})$ for all $t\in\mathcal{T}$. Furthermore $G(\tau^{(k-1)}) = \mathcal{H}(\tau^{(k-1)}\vert \hat\theta_{1,\tau^{(k-1)}}, \hat\theta_{2,\tau^{(k-1)}})$. Therefore, for all $k\geq 1$,
\[G(\tau^{(k)})\leq \mathcal{H}(\tau^{(k)}\vert \hat\theta_{1,\tau^{(k-1)}}, \hat\theta_{2,\tau^{(k-1)}}) \leq \mathcal{H}(\tau^{(k-1)}\vert \hat\theta_{1,\tau^{(k-1)}}, \hat\theta_{2,\tau^{(k-1)}}) = G(\tau^{(k-1)}).\]

Hence the objective function $G$ is non-increasing along the iterates of Algorithm \ref{algo:MM}.  Note that this algorithm is already potentially faster than the brute force approach, particular when $T$ is large, since we compute the graphical-lasso solutions $\hat\theta_{j,\tau^{(k)}}$ only for time points visited along the iterations. We propose to further reduce the computational cost by computing the solutions $\hat\theta_{j,\tau^{(k)}}$ only approximately.

Given $\gamma>0$, and a matrix $\theta\in\rset^{p\times p}$, define $\Prox_{\gamma}(\theta)$ (the proximal map with respect to the penalty function $\wp(\theta) = \alpha\|\theta\|_1 + (1-\alpha)\normfro{\theta}^2/2$) as the symmetric $\rset^{p\times p}$ matrix such that for $1\leq i,j\leq p$,
\[\left(\Prox_\gamma(\theta)\right)_{ij}=\left\{\begin{array}{ll} 0 & \mbox{ if } |\theta_{ij}|<  \alpha\gamma\\
\frac{\theta_{ij}-\alpha\gamma}{1+(1-\alpha)\gamma} & \mbox{ if } \theta_{ij}\geq \alpha\gamma \\
 \frac{\theta_{ij}+\alpha\gamma}{1+(1-\alpha)\gamma} & \mbox{ if } \theta_{ij}\leq  -\alpha\gamma\, .\end{array}\right.
\]

We consider the following algorithm.

\begin{algorithm}\label{algo:1}[Approximate MM algorithm]
Fix a step-size $\gamma>0$. Pick some initial value $\tau^{(0)}\in\mathcal{T}$, $\theta_1^{(0)},\theta_2^{(0)}\in\M_p^+$.  Repeat for $k=1,\ldots,K$. Given ($\tau^{(k-1)}$, $\theta_1^{(k-1)}$, $\theta^{(k-1)}_2$), do the following:
\begin{enumerate}
\item Compute
\[\theta_1^{(k)} = \Prox_{\gamma\lambda_{1,\tau^{(k-1)}}}\left(\theta_1^{(k-1)}-\gamma\left(S_1(\tau^{(k-1)})-(\theta_1^{(k-1)})^{-1}\right)\right),\]
\item compute
\[\theta_2^{(k)} = \Prox_{\gamma\lambda_{2,\tau^{(k-1)}}}\left(\theta_2^{(k-1)}-\gamma\left(S_2(\tau^{(k-1)})-(\theta_2^{(k-1)})^{-1}\right)\right),\]
\item compute 
\[\tau^{(k)} \eqdef \textsf{Argmin}_{t\in\mathcal{T}}\; \mathcal{H}\left(t\vert \theta_1^{(k)},\theta_2^{(k)}\right).\]
\end{enumerate}
\begin{flushright}
$\square$
\end{flushright}
\end{algorithm}
\medskip

Note that, if instead of a single proximal gradient update in Step (1)-(2), we do a large number proximal gradient updates (an infinite number for the sake of the argument), we recover exactly Algorithm \ref{algo:MM}. Hence Algorithm \ref{algo:1} is an approximate version of Algorithm  \ref{algo:MM}.

\begin{remark}\label{rem:1}
\begin{enumerate}
\item Notice that one can easily compute $\mathcal{H}(\tau+1\vert \theta_1,\theta_2)$ from $\mathcal{H}(\tau\vert \theta_1,\theta_2)$ by a rank-one update in $O(p^2)$ number of operations. Hence the computational cost of Step (3) is $O(Tp^2)$. And the total computational cost of one iteration of Algorithm \ref{algo:1} is $O(p^3 + Tp^2)$.
\item In practice one needs to exercise some care in choosing the step-size $\gamma$. As we show below, a small enough $\gamma$ is needed in order to maintain positive definiteness of the matrices $\theta_1^{(k)}$ and $\theta^{(k)}_2$. However, too small values of $\gamma$ lead to slow convergence. A nice trade-off that works well from the software engineering viewpoint is to start with a large value of $\gamma$ and to re-initialize the algorithm with a smaller $\gamma$ if at some point positive definiteness is lost. This issue is discussed more extensively in \cite{atchade:etal:15}.
\end{enumerate}
\end{remark}

Algorithm \ref{algo:1} raises two basic questions. The first question is whether the algorithm is stable, where here by stability we mean whether the algorithm runs without breaking down.  Indeed we notice that Steps (1 and 2) involve taking the inverse of the matrices $\theta_1^{(k-1)}$, and $\theta_2^{(k-1)}$, but there is  no guarantee a priori that these matrices are non-singular. Using results established in \cite{atchade:etal:15}, we answer this question by showing below that if the step-size $\gamma$ is small enough then the algorithm is actually stable. The second basic question is whether the algorithm converges to the optimal value. We address this question below. 

For $j\in\{1,2\}$, we set
\[\underline{\lambda}_j \eqdef \min_{\tau\in \mathcal{T}} \lambda_{j,\tau},\;\; \bar\lambda_j \eqdef \max_{\tau\in \mathcal{T}} \lambda_{j,\tau},\;\; \mu_j \eqdef \max_{\tau\in\mathcal{T}}\left[\frac{1}{2}\|S_j(\tau)\|_2 +\alpha p\lambda_{j,\tau}\right],\]
\[\textsf{b}_j \eqdef \frac{-\mu_j + \sqrt{\mu_j^2+2\bar\lambda_j(1-\alpha)\frac{n_0}{T}}}{2(1-\alpha)\bar\lambda_j},\;\;\; \textsf{B}_j \eqdef \frac{\mu_j + \sqrt{\mu_j^2 +2\underline{\lambda}_j(1-\alpha)}}{2(1-\alpha)\underline{\lambda}_j}.\]

\begin{lemma}\label{lem1}
Fix $j\in\{1,2\}$. For all $\tau\in\mathcal{T}$, $\hat\theta_{j,\tau}\in \M_p^+(\textsf{b}_j,+\infty)$. Let $\{(\theta_1^{(k)},\theta_2^{(k)}),\;k\geq 0\}$ be the output of Algorithm  \ref{algo:1}.  If the step-size $\gamma$ satisfies $\gamma\in(0,\textsf{b}_j^2 ]$, and $\theta_j^{(0)}\in\M_p^+(\textsf{b}_j,\textsf{B}_j)$, then $\theta_j^{(k)}\in\M_p^+(\textsf{b}_j,\textsf{B}_j)$, for all $k\geq 0$.
\end{lemma}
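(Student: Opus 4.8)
The plan is to split the statement into two independent claims and handle each with the optimality/fixed-point structure of the problem. For the first claim, that $\hat\theta_{j,\tau}\in\M_p^+(\textsf{b}_j,+\infty)$ for every $\tau\in\mathcal{T}$, I would use the first-order optimality condition for the convex program in (\ref{glasso_1}). Writing $g_{j,\tau}$ as (a positive multiple of) $-\log\det(\vartheta)+\textsf{Tr}(\vartheta S_j(\tau))$, whose gradient is $-\vartheta^{-1}+S_j(\tau)$, the subdifferential condition at $\hat\theta_{j,\tau}$ reads $c_{j,\tau}\bigl(-\hat\theta_{j,\tau}^{-1}+S_j(\tau)\bigr)+\lambda_{j,\tau}\bigl(\alpha Z+(1-\alpha)\hat\theta_{j,\tau}\bigr)=0$ for some $Z$ in the subdifferential of $\|\cdot\|_1$ at $\hat\theta_{j,\tau}$ (so $\|Z\|_2\le$ a bound of order $p$), where $c_{1,\tau}=\tfrac12\tfrac{\tau}{T}$ and $c_{2,\tau}=\tfrac12(1-\tfrac{\tau}{T})$, both bounded below by $\tfrac{n_0}{2T}$ on $\mathcal{T}$. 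Rearranging and taking spectral norms yields $\hat\theta_{j,\tau}^{-1}=S_j(\tau)+\tfrac{\lambda_{j,\tau}}{c_{j,\tau}}\bigl(\alpha Z+(1-\alpha)\hat\theta_{j,\tau}\bigr)$, so that $\lambda_{\textsf{max}}(\hat\theta_{j,\tau}^{-1})\le \|S_j(\tau)\|_2+2\alpha p\lambda_{j,\tau}/c_{j,\tau}+\tfrac{(1-\alpha)\lambda_{j,\tau}}{c_{j,\tau}}\lambda_{\textsf{max}}(\hat\theta_{j,\tau})$; bounding $\lambda_{\textsf{max}}(\hat\theta_{j,\tau})$ crudely and identifying the definition of $\mu_j$, this becomes a quadratic inequality in $x=\lambda_{\textsf{min}}(\hat\theta_{j,\tau})=1/\lambda_{\textsf{max}}(\hat\theta_{j,\tau}^{-1})$ of the form $(1-\alpha)\bar\lambda_j x^2+\mu_j x-\tfrac{n_0}{T}\le 0$ (up to the normalization by $c_{j,\tau}\ge n_0/(2T)$), whose positive root is exactly $\textsf{b}_j$. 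Hence $\lambda_{\textsf{min}}(\hat\theta_{j,\tau})\ge\textsf{b}_j$. Most of this is precisely the content of the results of \cite{atchade:etal:15} cited just above the lemma, so I would invoke that reference for the algebra and only reproduce the identification of the constant $\textsf{b}_j$.

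For the second claim — that the iterates stay in $\M_p^+(\textsf{b}_j,\textsf{B}_j)$ when $\gamma\in(0,\textsf{b}_j^2]$ and the initializer does — I would argue by induction on $k$, treating the two blocks $j=1,2$ separately since Steps (1) and (2) are formally identical. The inductive step is: if $\theta_j^{(k-1)}\in\M_p^+(\textsf{b}_j,\textsf{B}_j)$, then $\theta_j^{(k)}=\Prox_{\gamma\lambda_{j,\tau^{(k-1)}}}\bigl(\theta_j^{(k-1)}-\gamma(S_j(\tau^{(k-1)})-(\theta_j^{(k-1)})^{-1})\bigr)$ lies in the same set. Here I would invoke the stability analysis of the proximal-gradient map for the \textsf{glasso} objective from \cite{atchade:etal:15}: the key facts are (i) the proximal map $\Prox_{\gamma\lambda}$ is a (scaled) soft-thresholding that is $1$-Lipschitz and, more importantly, satisfies a contraction-type estimate toward the fixed point $\hat\theta_{j,\tau}$ of the map, and (ii) on $\M_p^+(\textsf{b}_j,\textsf{B}_j)$ the gradient step $\theta\mapsto\theta-\gamma(S_j(\tau)-\theta^{-1})$ is well-defined and, because $\nabla g$ is Lipschitz on that region with a constant controlled by $1/\textsf{b}_j^2$, the choice $\gamma\le\textsf{b}_j^2$ makes the composite map nonexpansive relative to $\hat\theta_{j,\tau}$ in the appropriate sense. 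Since $\hat\theta_{j,\tau}$ itself has smallest eigenvalue $\ge\textsf{b}_j$ by the first part, and one checks it also satisfies $\lambda_{\textsf{max}}(\hat\theta_{j,\tau})\le\textsf{B}_j$ by a symmetric computation (the quadratic in the other direction giving root $\textsf{B}_j$), the ball $\M_p^+(\textsf{b}_j,\textsf{B}_j)$ is mapped into itself. For the lower bound on $\lambda_{\textsf{min}}(\theta_j^{(k)})$ in particular, the cleanest route is the elementary eigenvalue inequality: after the gradient step the smallest eigenvalue of $\theta-\gamma(S_j(\tau)-\theta^{-1})$ is bounded below by a quantity that, over the admissible range of $\theta$ and using $\gamma\le\textsf{b}_j^2$, stays $\ge\textsf{b}_j$ (this is where the specific value $\textsf{b}_j^2$ enters — it is calibrated so that the "pullback" term $\gamma\theta^{-1}$, of size $\ge\gamma/\textsf{B}_j$, compensates the possible decrease $\gamma\|S_j(\tau)\|_2$), and soft-thresholding by a symmetric positive operation with the $(1-\alpha)\gamma$ shrinkage in the denominator only helps keep eigenvalues in range; the upper bound $\textsf{B}_j$ is similarly preserved.

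The main obstacle I anticipate is the second claim, specifically verifying that the particular algebraic value $\gamma\le\textsf{b}_j^2$ is exactly what is needed to keep $\lambda_{\textsf{min}}$ above $\textsf{b}_j$ through the composition of the (non-linear, non-smooth) proximal map with the gradient step — the proximal map's scaling $1/(1+(1-\alpha)\gamma)$ and off-diagonal-only thresholding make naive eigenvalue bounds lossy, so one must use the fixed-point/contraction viewpoint rather than a term-by-term estimate. I expect this to reduce cleanly to a lemma in \cite{atchade:etal:15} once the constants $\textsf{b}_j,\textsf{B}_j,\mu_j$ are matched to theirs; the bulk of the work is bookkeeping to confirm that match. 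The first claim, by contrast, is a direct KKT computation and should be short.
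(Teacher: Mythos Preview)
Your handling of the first claim via the KKT conditions is the same route as the paper's; the paper simply defers it to Lemma~1 of \cite{atchade:etal:15}.

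For the second claim there is a genuine gap in your primary line of argument. You propose to deduce invariance of $\M_p^+(\textsf{b}_j,\textsf{B}_j)$ from nonexpansiveness of the proximal-gradient map toward the fixed point $\hat\theta_{j,\tau}$ together with $\hat\theta_{j,\tau}\in\M_p^+(\textsf{b}_j,\textsf{B}_j)$. But $\M_p^+(\textsf{b}_j,\textsf{B}_j)$ is a spectral slab, not a Frobenius ball centered at $\hat\theta_{j,\tau}$; nonexpansiveness in $\normfro{\cdot}$ toward a point of a set does not make that set invariant. Your side remark that soft-thresholding ``only helps keep eigenvalues in range'' is likewise unjustified: entrywise shrinkage of a symmetric matrix can move eigenvalues in either direction, so the prox map cannot be disposed of by a monotonicity argument. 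The contraction-type estimate from \cite{atchade:etal:15} is indeed invoked in the paper, but for the \emph{convergence} result (Theorem~\ref{thm1}), not for stability.

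The paper's stability proof is precisely the direct, term-by-term eigenvalue computation you mention as an alternative and then dismiss as ``lossy''. One writes the optimality condition of the prox step as an identity: for some $Z$ with entries in $[-1,1]$,
\[
\bigl(1+(1-\alpha)\lambda_{j,\tau}\gamma\bigr)\,\theta_j^{(k+1)}
=\theta_j^{(k)}+c\,(\theta_j^{(k)})^{-1}-\gamma M,\qquad \|M\|_2\le\mu_j,
\]
with $c$ a positive multiple of $\gamma$. The restriction $\gamma\le\textsf{b}_j^2$ does not enter as a ``compensation'' involving $\textsf{B}_j$ as you suggest; it is there to force $\sqrt{c}\le\textsf{b}_j$, so that the scalar map $\lambda\mapsto\lambda+c/\lambda$ is increasing on $[\textsf{b}_j,\infty)$ and hence $\lambda_{\textsf{min}}\bigl(\theta_j^{(k)}+c\,(\theta_j^{(k)})^{-1}\bigr)\ge\textsf{b}_j+c/\textsf{b}_j$. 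Combining with $\lambda_{\textsf{min}}(A+B)\ge\lambda_{\textsf{min}}(A)-\|B\|_2$ and dividing through by $1+(1-\alpha)\bar\lambda_j\gamma$ yields a lower bound that equals $\textsf{b}_j$ exactly, because $\textsf{b}_j$ is \emph{defined} as the positive root of the resulting quadratic. The upper bound by $\textsf{B}_j$ is obtained symmetrically. No fixed-point or contraction argument is needed; the term-by-term eigenvalue estimate is sharp by construction of $\textsf{b}_j$ and $\textsf{B}_j$.
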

\begin{proof}See Section \ref{sec:proof:lem1}.
\end{proof}

\medskip
\begin{remark}
This lemma is based on Lemma 1, and 2 of \cite{atchade:etal:15} which studied the proximal gradient algorithm for the \textsf{glasso} problem. The first statement of Lemma \ref{lem1} implies that the change-point problem (\ref{est:2}) has at least one solution. The second part shows that when the step-size $\gamma$ is small enough, all the iterates of the algorithm remains positive definite. We note that the fact that $\alpha<1$ is crucial in the arguments. The result remains true where $\alpha=1$, however the arguments is slightly more involved (see \cite{atchade:etal:15}~Lemma 2). For simplicity we focus in this paper on the case $\alpha\in [0,1)$. 
\end{remark}

We now address the issue of convergence. Clearly the function $t\mapsto \mathcal{H}(t\vert \theta_1,\theta_2)$ is not smooth, nor convex. This implies that Algorithm \ref{algo:1} cannot be analyzed using standard optimization tools. And indeed, we will not be able to establish that the output of Algorithm \ref{algo:1} converges to the minimizer $\hat\tau$. Rather, we introduce a containment assumption (Assumption H\ref{H1}) and we show that when it holds, then the output of Algorithm \ref{algo:1} converges to some neighborhood of the true change-point (the existence of this true change-point is part of the assumption).
\begin{assumption}\label{H1}
There exist $\epsilon>0$, $c\geq 0$, $\kappa\in [0,1)$, and $\tau_\star\in\mathcal{T}$ such that the following holds. For any $\tau\in\mathcal{T}$, and for any $\theta_1,\theta_2\in\M_p^+$ such that $\normfro{\theta_1-\hat\theta_{1,\tau}} + \normfro{\theta_2-\hat\theta_{2,\tau}}\leq \epsilon$ we have
\begin{equation}\label{eq:H1}
\left|\textsf{Argmin }_{t\in\mathcal{T}} \H(t\vert \theta_{1},\theta_{2}) -\tau_\star\right| \leq \kappa |\tau - \tau_\star| + c.\end{equation}
\end{assumption}

\begin{remark}\label{rem:thm1}
Plainly, what is imposed in H\ref{H1} is the existence of a time point $\tau_\star\in\mathcal{T}$ (that we can view as the true change-point), such that anytime we take $\tau\in\mathcal{T}$ that is far from $\tau_\star$ in the sense that $|\tau-\tau_\star|>c/(1-\kappa)$, if $\theta_1,\theta_2$ are sufficiently close to the solutions $\hat\theta_{1,\tau}$ and $\hat\theta_{2,\tau}$ respectively, then computing $\textsf{Argmin }_{t\in\mathcal{T}} \H(t\vert \theta_{1},\theta_{2})$ brings us closer to $\tau_\star$:
\[\left|\textsf{Argmin }_{t\in\mathcal{T}} \H(t\vert \theta_{1},\theta_{2}) -\tau_\star\right| \leq \kappa|\tau-\tau_\star| + c < |\tau - \tau_\star|.\] 
This containment assumption is akin to a curvature assumption on the function $t\mapsto \H(t\vert \theta_{1},\theta_{2})$ when $\theta_1$ and $\theta_2$ are reasonably close to $\hat\theta_{1,\tau}$, $\hat\theta_{2,\tau}$, respectively. The assumption seems realistic in settings where the data $X^{(1:T)}$ is indeed drawn from a Gaussian graphical model with true change-point $\tau_\star$, and parameters $\theta_{\star,1}$, $\theta_{\star,2}$. Indeed in this case, and if $T$ is large enough, for any $\tau$ that is not too close to the boundaries, one expect $\hat\theta_{1,\tau}$ and $\hat\theta_{2,\tau}$ to be good estimates of $\theta_{\star,1}$ and $\theta_{\star,2}$, respectively. Therefore if $\normfro{\theta_1-\hat\theta_{1,\tau}} + \normfro{\theta_2-\hat\theta_{2,\tau}}\leq \epsilon$ for $\epsilon$ small enough, one expect as well $\theta_1$ and $\theta_2$ to be close to $\theta_{\star,1}$ and $\theta_{\star,2}$ respectively. Hence $\textsf{Argmin }_{t\in\mathcal{T}} \H(t\vert \theta_{1},\theta_{2})$ should be close to $\textsf{Argmin }_{t\in\mathcal{T}} \H(t\vert \theta_{\star,1},\theta_{\star,2})$, which in turn should be close to $\tau_\star$. Theorem \ref{thm2} below will make this intuition precise.
\begin{flushright}
$\square$
\end{flushright}
\end{remark}
\medskip
In the next result we will see that in fact the iterates $\theta_1^{(k)}$ and $\theta_2^{(k)}$ closely track $\theta_{1,\tau^{(k)}}$ and $\theta_{2,\tau^{(k)}}$ respectively. Hence, when H\ref{H1} holds Equation (\ref{eq:H1}) guarantees that the sequence $\tau^{(k)}$ remains close that $\tau_\star$.
\medskip

\begin{theorem}\label{thm1}
Suppose that $\gamma\in(0,\textsf{b}_1^2\wedge \textsf{b}_2^2]$, and $\theta_j^{(0)}\in\M_p^+(\textsf{b}_j,\textsf{B}_j)$, for $j=1,2$. Then
\[\lim_k\normfro{\theta_1^{(k)}-\hat\theta_{1,\tau^{(k)}}}=0, \;\;\; \lim_k\normfro{\theta_2^{(k)}-\hat\theta_{2,\tau^{(k)}}}=0.\]
Furthermore, if  H\ref{H1} holds then 
\[\limsup_{k\to\infty}\left| \tau^{(k)} - \tau_\star\right| \leq \frac{c}{1-\kappa}.\]
\end{theorem}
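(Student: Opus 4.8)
The plan is to prove the two assertions separately; the engine for the first one is a Lyapunov function for Algorithm \ref{algo:1}. Write $\phi_{j,\tau}(\theta)\eqdef g_{j,\tau}(\theta)+\lambda_{j,\tau}\wp(\theta)$, so that $\H(\tau\vert\theta_1,\theta_2)=\phi_{1,\tau}(\theta_1)+\phi_{2,\tau}(\theta_2)$ and $\hat\theta_{j,\tau}$ is the unique minimizer of $\phi_{j,\tau}$ over $\M_p^+$, and set
\[
\Psi_k\eqdef\H\left(\tau^{(k)}\vert\theta_1^{(k)},\theta_2^{(k)}\right)=\min_{t\in\mathcal{T}}\H\left(t\vert\theta_1^{(k)},\theta_2^{(k)}\right).
\]
I claim $\Psi_k$ is non-increasing with a quantitative decrease. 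First, optimality of $\tau^{(k)}$ in Step (3) gives $\Psi_k\le\H(\tau^{(k-1)}\vert\theta_1^{(k)},\theta_2^{(k)})=\sum_j\phi_{j,\tau^{(k-1)}}(\theta_j^{(k)})$. Second, Steps (1)--(2) perform one proximal gradient iteration for $\phi_{1,\tau^{(k-1)}}$, resp. $\phi_{2,\tau^{(k-1)}}$; since $\gamma\le\textsf{b}_j^2$, Lemma \ref{lem1} and the estimates of \cite{atchade:etal:15} keep every iterate in the convex set $\M_p^+(\textsf{b}_j,\textsf{B}_j)$ and bound the Lipschitz constant of the smooth part of $\phi_{j,\tau}$ there, so the proximal gradient sufficient-decrease inequality yields, for some $c_\gamma>0$,
\[
\sum_j\phi_{j,\tau^{(k-1)}}(\theta_j^{(k)})\le\sum_j\phi_{j,\tau^{(k-1)}}(\theta_j^{(k-1)})-c_\gamma\sum_j\normfro{\theta_j^{(k)}-\theta_j^{(k-1)}}^2 .
\]
Third, $\sum_j\phi_{j,\tau^{(k-1)}}(\theta_j^{(k-1)})=\H(\tau^{(k-1)}\vert\theta_1^{(k-1)},\theta_2^{(k-1)})=\Psi_{k-1}$, because $\tau^{(k-1)}$ minimized $\H(\cdot\vert\theta_1^{(k-1)},\theta_2^{(k-1)})$ at iteration $k-1$. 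Combining the three displays, $\Psi_k\le\Psi_{k-1}-c_\gamma\sum_j\normfro{\theta_j^{(k)}-\theta_j^{(k-1)}}^2$.

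Since $\Psi_k\ge\min_{t\in\mathcal{T}}G(t)>-\infty$ ($\mathcal{T}$ finite, each $G(t)$ finite by Lemma \ref{lem1}), $\Psi_k$ converges, whence $\sum_{k\ge1}\sum_j\normfro{\theta_j^{(k)}-\theta_j^{(k-1)}}^2<\infty$ and in particular $\normfro{\theta_j^{(k)}-\theta_j^{(k+1)}}\to0$ for $j=1,2$. For the second ingredient I use the analysis of the \textsf{glasso} proximal gradient iteration from \cite{atchade:etal:15} underlying Lemma \ref{lem1}: for each $\tau\in\mathcal{T}$ the map $T_{j,\tau}:\theta\mapsto\Prox_{\gamma\lambda_{j,\tau}}(\theta-\gamma(S_j(\tau)-\theta^{-1}))$ restricts to a contraction of $\M_p^+(\textsf{b}_j,\textsf{B}_j)$ with a factor $\rho_j<1$ that may be chosen uniformly over the finite set $\mathcal{T}$, with unique fixed point $\hat\theta_{j,\tau}$ (strictness coming from strong convexity of $-\log\det$ on this bounded set, reinforced by the shrinkage factor $(1+(1-\alpha)\gamma)^{-1}$ of the elastic-net prox). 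Since $\theta_j^{(k+1)}=T_{j,\tau^{(k)}}(\theta_j^{(k)})$, the elementary estimate $\normfro{\theta-x^\star}\le(1-\rho_j)^{-1}\normfro{\theta-T_{j,\tau}(\theta)}$ for a $\rho_j$-contraction with fixed point $x^\star$ gives
\[
\normfro{\theta_j^{(k)}-\hat\theta_{j,\tau^{(k)}}}\le\frac{1}{1-\rho_j}\,\normfro{\theta_j^{(k)}-\theta_j^{(k+1)}}\longrightarrow0\quad(k\to\infty),
\]
which is the first claim.

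For the second claim, fix $\epsilon>0$, $c\ge0$, $\kappa\in[0,1)$, $\tau_\star\in\mathcal{T}$ as in H\ref{H1}. By the first claim there is $k_0$ with $\normfro{\theta_1^{(k)}-\hat\theta_{1,\tau^{(k)}}}+\normfro{\theta_2^{(k)}-\hat\theta_{2,\tau^{(k)}}}\le\epsilon$ for all $k\ge k_0$. For such $k$, apply \eqref{eq:H1} with $\tau=\tau^{(k)}$ and $(\theta_1,\theta_2)=(\theta_1^{(k)},\theta_2^{(k)})$ (in $\M_p^+\times\M_p^+$ by Lemma \ref{lem1}); since $\tau^{(k+1)}=\textsf{Argmin}_{t\in\mathcal{T}}\H(t\vert\theta_1^{(k)},\theta_2^{(k)})$ is one of the minimizers covered by \eqref{eq:H1}, we obtain
\[
\big|\tau^{(k+1)}-\tau_\star\big|\le\kappa\,\big|\tau^{(k)}-\tau_\star\big|+c,\qquad k\ge k_0 .
\]
Iterating this linear recursion, $|\tau^{(k_0+m)}-\tau_\star|\le\kappa^m|\tau^{(k_0)}-\tau_\star|+c/(1-\kappa)$, and letting $m\to\infty$ yields $\limsup_k|\tau^{(k)}-\tau_\star|\le c/(1-\kappa)$.

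The main obstacle is the first part: recognizing that, although $t\mapsto\H(t\vert\theta_1,\theta_2)$ is neither smooth nor convex and Algorithm \ref{algo:1} has no manifest monotonicity, the value $\Psi_k$ at the current iterate \emph{is} a genuine Lyapunov function, obtained by pairing the optimality of the Step (3) line search at two consecutive iterations with the sufficient-decrease of the Step (1)--(2) proximal gradient moves; and then using the quantitative \emph{strict} contraction of the \textsf{glasso} proximal gradient map (not merely its non-expansiveness) to upgrade $\normfro{\theta_j^{(k)}-\theta_j^{(k+1)}}\to0$ into convergence of the tracking error $\normfro{\theta_j^{(k)}-\hat\theta_{j,\tau^{(k)}}}$. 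Given H\ref{H1}, the second part is then the routine iteration of a contracting affine recursion.
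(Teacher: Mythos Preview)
Your proof is correct and follows the same route as the paper: your Lyapunov function $\Psi_k$ is the paper's $\F_k$ up to an additive constant, the sufficient-decrease argument and the resulting $\normfro{\theta_j^{(k+1)}-\theta_j^{(k)}}\to0$ match the paper's Lemma~\ref{lem3}, and your a-posteriori bound via the one-step contraction toward $\hat\theta_{j,\tau^{(k)}}$ is a clean repackaging of the paper's algebraic rearrangement of that same inequality (what is actually available, and all you need, is contraction \emph{toward the fixed point}, $\normfro{T_{j,\tau}(\theta)-\hat\theta_{j,\tau}}^2\le(1-\gamma/\textsf{B}_j^2)\normfro{\theta-\hat\theta_{j,\tau}}^2$, not full metric contraction of $T_{j,\tau}$).

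One indexing slip in the second part: by Step~(3) of Algorithm~\ref{algo:1}, $\tau^{(k)}=\textsf{Argmin}_{t\in\mathcal{T}}\H(t\vert\theta_1^{(k)},\theta_2^{(k)})$, not $\tau^{(k+1)}$. With your choice $(\theta_1,\theta_2)=(\theta_1^{(k)},\theta_2^{(k)})$ in H\ref{H1}, this actually gives $|\tau^{(k)}-\tau_\star|\le\kappa|\tau^{(k)}-\tau_\star|+c$ directly for $k\ge k_0$, hence $|\tau^{(k)}-\tau_\star|\le c/(1-\kappa)$ without any iteration. The paper instead pairs $(\theta_1^{(k+1)},\theta_2^{(k+1)})$ with $\tau=\tau^{(k)}$ to get the genuine one-step recursion $|\tau^{(k+1)}-\tau_\star|\le\kappa|\tau^{(k)}-\tau_\star|+c$ and then iterates; both variants are valid.
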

\begin{proof}
See Section \ref{sec:proof:thm1}
\end{proof}

We now address the question whether H\ref{H1} is a realistic assumption. More precisely we will show that the argument highlighted in Remark \ref{rem:thm1} holds true under some regularity conditions. Suppose that $X^{(1:T)}\eqdef (X^{(1)},\ldots,X^{(T)})$ are $p$-dimensional independent random variables such that  $X^{(1)},\ldots,X^{(\tau_\star)}\stackrel{i.i.d.}{\sim}\textbf{N}(0,\theta_{\star,1}^{-1})$ and $X^{(\tau_\star + 1)},\ldots,X^{(T)}\stackrel{i.i.d.}{\sim}\textbf{N}(0,\theta_{\star,2}^{-1})$, for some unknown change-point $\tau_\star$,  and unknown symmetric positive definite precision matrices $\theta_{\star,1}\neq \theta_{\star,2}$. We set $\Sigma_{\star,j}\eqdef\theta_{\star,j}^{-1}$,  and we let $s_j$ denotes the number of non-zero entries of $\theta_{\star,j}$, $j=1,2$. For an integer $\iota\in \{1,\ldots,p\}$, we define the $\iota$-th  restricted eigenvalues of $\Sigma_{\star,j}$ as
\begin{multline*}
\underline{\kappa}_j(\iota) \eqdef\inf\left\{u'(\Sigma_{\star,j})u,\;\|u\|_2=1,\;\|u\|_0\leq \iota\right\},\;\\
\;\bar{\kappa}_j(\iota) \eqdef\sup\left\{u'(\Sigma_{\star,j})u,\;\|u\|_2=1,\;\|u\|_0\leq \iota\right\}.\end{multline*}

We set $s\eqdef \max(s_1, s_2)$, $\bar \kappa\eqdef \max\left(\bar\kappa_1(2),\bar\kappa_2(2)\right)$, $\underline{\kappa}\eqdef \min\left(\underline{\kappa}_1(2),\underline{\kappa}_2(2)\right)$, and we set the regularization parameter $\lambda_{j,\tau}$ as 
\begin{equation}\label{eq:lambda}
\lambda_{1,\tau} \eqdef \frac{\bar\kappa}{\alpha T}\sqrt{48\tau\log(pT)},\;
\; \lambda_{2,\tau} \eqdef \frac{\bar\kappa}{\alpha T}\sqrt{48(T-\tau)\log(pT)},\;\;\tau\in\mathcal{T}.\end{equation}
We need to assume that  the parameter $\alpha\in [0,1)$ in the regularization term is large enough to produce approximately sparse solutions in (\ref{glasso_1}). To that end, we assume that
\begin{equation}\label{eq:cond:alpha}
\frac{\alpha}{1-\alpha} \geq \max\left(\|\theta_{\star,1}\|_\infty, \|\theta_{\star,2}\|_\infty\right).\end{equation}
Finally, we assume that the search domain $\mathcal{T}$ is such that for all $\tau\in\mathcal{T}$,
\begin{equation}\label{eq:cond:T:1}
\min\left(\tau,T-\tau\right)\geq A_1^2\log(pT),\end{equation}
where
\[A_1 \eqdef\max\left(2\left(\frac{\bar \kappa}{\underline{\kappa}}\right)^2,(1280)s^{1/2}\bar\kappa(\|\theta_{\star,1}\|_2\vee \|\theta_{\star,1}\|_2)\right),\]
and
\begin{multline}\label{eq:cond:T:2}
\bar\kappa\sqrt{\tau\log(pT)}\geq \frac{1}{2\sqrt{3}}(\tau-\tau_\star)_+ \|\theta_{\star,2}^{-1} - \theta_{\star,1}^{-1}\|_\infty,\;\; \\
\mbox{ and }\;\;\; \bar\kappa\sqrt{(T-\tau)\log(pT)}\geq \frac{1}{2\sqrt{3}}(\tau_\star-\tau)_+ \|\theta_{\star,2}^{-1} - \theta_{\star,1}^{-1}\|_\infty,
\end{multline}
where  $x_+\eqdef \max(x,0)$.
\begin{remark}
Assumption (\ref{eq:cond:T:1}) is a minimum sample size requirement. See for instance \cite{ravikumaretal11}~Theorem 1, and 2 for similar conditions in standard Gaussian graphical model estimation. Here we require to have $\mathcal{T}$ such that $\min(\tau,T-\tau)=O(s\log(pT))$ for all $\tau\in\mathcal{T}$. This obviously implies that we need $T$ to be at least $O(s\log(p))$. It is unclear whether the large constant $1280$ in (\ref{eq:cond:T:1}) is tight or simply an artifact of our proof techniques.

To understand Assumption (\ref{eq:cond:T:2}), note that for $\tau>\tau_\star$, the estimator $\hat\theta_{1,\tau}$ in (\ref{glasso_1}) is based on misspecified data $X^{(\tau_\star+1)},\ldots,X^{(\tau)}$. Hence if $\tau>\tau_\star$ is too far away from $\tau_\star$, the estimators $\hat\theta_{1,\tau}$ may behave poorly, particularly if $\theta_{\star,1}$ are $\theta_{\star,2}$ are very different. Assumption (\ref{eq:cond:T:2}) rules out such settings, by requiring the search domains $\mathcal{T}$ to be roughly a $\sqrt{T}$ neighborhood of $\tau_\star$. Indeed, suppose that $\tau_\star = \rho_\star T$, for some $\rho_\star\in (0,1)$. Then it can be easily checked that any search domain of the form $(\tau_\star-r_1T^{1/2}, \tau_\star + r_2 T^{1/2})$, satisfies (\ref{eq:cond:T:1}) and (\ref{eq:cond:T:2}) for $T$ large enough, provided that
\[0<r_1\leq \frac{2\sqrt{3}\bar \kappa\sqrt{\rho_\star\log(pT)}}{\|\theta_{\star,2}^{-1}-\theta_{\star,1}^{-1}\|_\infty},\;\;\mbox{ and }\;\; 0<r_2\leq \frac{2\sqrt{3}\bar \kappa\sqrt{(1-\rho_\star)\log(pT)}}{\|\theta_{\star,2}^{-1}-\theta_{\star,1}^{-1}\|_\infty}.\]
Of course, this search domain is difficult to use in practice since it depends on $\tau_\star$. In practice, we have found that taking $\mathcal{T}$ of the form $(r T, (1-r)T)$ for $r\leq 0.1$ works well, even though it is much wider than what is prescribed by our theory.
\begin{flushright}
$\square$
\end{flushright}
\end{remark}

For $\tau\in\mathcal{T}$, let
\[r_{1,\tau} \eqdef A_2 \bar\kappa \|\theta_{\star,1}\|_2^2\sqrt{\frac{s_1\log(pT)}{\tau}},\;\;\;r_{2,\tau} \eqdef A_2 \bar\kappa \|\theta_{\star,2}\|_2^2\sqrt{\frac{s_2\log(pT)}{T-\tau}},\]
where $A_2$ is an absolute constant that can be taken as $16\times 20\times \sqrt{48}$. We set $b\eqdef \min(\lambda_{\textsf{min}}(\theta_{\star,1}),\lambda_{\textsf{min}}(\theta_{\star,2}))$, and $B\eqdef\max(\lambda_{\textsf{max}}(\theta_{\star,1}),\lambda_{\textsf{max}}(\theta_{\star,2}))$. We assume that for $j=1,2$, and for $\tau\in\mathcal{T}$,
\begin{multline}\label{eq:tech:cond:thm2}
r_{j,\tau}\leq\min\left(\frac{ \lambda_{\textsf{min}}(\theta_{\star,j})}{4},\frac{\|\theta_{\star,j}\|_\infty}{2},\frac{\|\theta_{\star,j}\|_1}{1+8s_j^{1/2}}\right),\;\;\;\; r_{j,\tau}\leq \frac{\|\theta_{\star,2}-\theta_{\star,1}\|_{\textsf{F}}}{2(1+8s^{1/2})}\\
\mbox{ and }\;\;\; r_{j,\tau} \leq A_2\left(\frac{b}{B}\right)^4\frac{\|\theta_{\star,j}\|_1}{s_j^{1/2}}.
\end{multline}
\begin{remark}
Condition (\ref{eq:tech:cond:thm2}) is mostly technical. As we will see below in Lemma \ref{lem2:thm2}, the term $r_{j,\tau}$ is the convergence rate toward $\theta_{\star,j}$ of the estimator $\hat\theta_{j,\tau}$. Note that all the terms on the right-hand sides in (\ref{eq:tech:cond:thm2}) depend only on $\theta_{\star,1}$ and $\theta_{\star,2}$. Hence if $s_j$ and the norms of $\theta_{\star,1}$, $\theta_{\star,2}$, $\theta_{\star,2}-\theta_{\star,1}$  do not grow with $p$, and $r_{j,\tau}\to 0$ as $p,T\to\infty$, then it is clear that (\ref{eq:tech:cond:thm2}) holds for $T$ large enough.
\end{remark}

\begin{theorem}\label{thm2}
Consider the output $\{(\theta_1^{(k)},\theta_2^{(k)}),\;k\geq 0\}$ of Algorithm  \ref{algo:1}. Suppose that $\gamma\in(0,\textsf{b}^2_1\wedge \textsf{b}_2^2]$, and $\theta_j^{(0)}\in\M_p^+(\textsf{b}_j,\textsf{B}_j)$, for $j=1,2$. Suppose that the statistical model underlying the data $X^{(1:T)}$ is as above, and that (\ref{eq:lambda})-(\ref{eq:tech:cond:thm2}) hold. Suppose also that 
\begin{equation}\label{eq:idenf}
\|\theta_{\star,2} -\theta_{\star,1}\|_{\textsf{F}} \geq 8A_2\max\left[\left(\frac{\lambda_{\textsf{min}}(\theta_{\star,1})}{\lambda_{\textsf{max}}(\theta_{\star,1})}\right)^2\frac{\|\theta_{\star,1}\|_1}{s_1^{1/2}}, \left(\frac{\lambda_{\textsf{min}}(\theta_{\star,2})}{\lambda_{\textsf{max}}(\theta_{\star,2})}\right)^2\frac{\|\theta_{\star,2}\|_1}{s_2^{1/2}}\right].\end{equation}
Then
\begin{equation}\label{eq:limsup}
\limsup_{k\to\infty} \left|\tau^{(k)} -\tau_\star\right| \leq \frac{4}{C_0} \log(p),\end{equation}
with probability at least $1-\frac{8}{pT} - \frac{4}{p^2\left(1-e^{-C_0}\right)}$,
where
\[C_0 \eqdef   \min\left[\frac{\|\theta_{\star,2}-\theta_{\star,1}\|_{\textsf{F}}^4}{128 B^4\|\theta_{\star,2}-\theta_{\star,1}\|_1^2},\left(\frac{\underline{\kappa}}{\bar\kappa}\right)^4\right].\]
\end{theorem}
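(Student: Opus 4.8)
The plan is to verify that Assumption~H\ref{H1} holds, with high probability, for the choices $\kappa=0$, $c=\frac{4}{C_0}\log(p)$ and a sufficiently small $\epsilon>0$ (which we allow to depend on $p,T$), and then to invoke Theorem~\ref{thm1}. First I would appeal to the estimation bound (Lemma~\ref{lem2:thm2} below): there is an event $E_0$ of probability at least $1-\frac{8}{pT}$ on which $\normfro{\hat\theta_{j,\tau}-\theta_{\star,j}}\leq r_{j,\tau}$ for all $\tau\in\mathcal{T}$ and $j\in\{1,2\}$. Set $\rho_j\eqdef\epsilon+\max_{\tau\in\mathcal{T}}r_{j,\tau}$; on $E_0$, any $\theta_1,\theta_2$ satisfying the hypothesis $\normfro{\theta_1-\hat\theta_{1,\tau}}+\normfro{\theta_2-\hat\theta_{2,\tau}}\leq\epsilon$ of H\ref{H1} also satisfy $\normfro{\theta_j-\theta_{\star,j}}\leq\rho_j$, and conditions (\ref{eq:tech:cond:thm2}) and (\ref{eq:idenf}) force each $\rho_j$ to be a small fraction of $b$, of $\|\theta_{\star,j}\|_\infty$ and of $\normfro{\theta_{\star,2}-\theta_{\star,1}}$; in particular all such $\theta_j\in\M_p^+$ have eigenvalues in $[b/2,2B]$. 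It then suffices to show: for every fixed $\theta_1,\theta_2\in\M_p^+$ with $\normfro{\theta_j-\theta_{\star,j}}\leq\rho_j$, with probability at least $1-\frac{4}{p^2(1-e^{-C_0})}$ the function $\Delta(t)\eqdef\H(t\vert\theta_1,\theta_2)-\H(\tau_\star\vert\theta_1,\theta_2)$ on $\mathcal{T}$ is strictly positive whenever $|t-\tau_\star|>\frac{4}{C_0}\log(p)$. Since $\Delta(\tau_\star)=0$ and $\tau_\star\in\mathcal{T}$, this forces every element of $\textsf{Argmin}_{t\in\mathcal{T}}\H(t\vert\theta_1,\theta_2)$ to lie within $\frac{4}{C_0}\log(p)$ of $\tau_\star$, which is exactly (\ref{eq:H1}) with $\kappa=0$ and $c=\frac{4}{C_0}\log(p)$.

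For the drift, unwinding the definitions of $\H$, $g_{j,\tau}$ and $S_j(\tau)$ gives, for $t>\tau_\star$,
\[
\Delta(t)=\frac{1}{2T}\Big[-(t-\tau_\star)\log\frac{\det\theta_1}{\det\theta_2}+\sum_{s=\tau_\star+1}^{t}X^{(s)'}(\theta_1-\theta_2)X^{(s)}\Big]+(\lambda_{1,t}-\lambda_{1,\tau_\star})\wp(\theta_1)+(\lambda_{2,t}-\lambda_{2,\tau_\star})\wp(\theta_2),
\]
and symmetrically for $t<\tau_\star$ with $\theta_1,\theta_2$ interchanged and the sum over $s\in(t,\tau_\star]$. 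Since the data in the window are i.i.d.\ $\textbf{N}(0,\Sigma_{\star,2})$, the expectation of the bracket equals $(t-\tau_\star)\big(\beta(\theta_1,\theta_{\star,2})-\beta(\theta_2,\theta_{\star,2})\big)$, where $\beta(\theta,\vartheta)\eqdef-\log\frac{\det\theta}{\det\vartheta}+\textsf{Tr}(\vartheta^{-1}\theta)-p$ is the Bregman divergence generated by $-\log\det$. Using the strong convexity and smoothness of $-\log\det$ on the eigenvalue range $[b/2,2B]$ (Hessian eigenvalues between $(2B)^{-2}$ and $(b/2)^{-2}$), the closeness of $\theta_1$ to $\theta_{\star,1}$ and of $\theta_2$ to $\theta_{\star,2}$, and the identifiability hypothesis (\ref{eq:idenf}) (which keeps $\normfro{\theta_{\star,2}-\theta_{\star,1}}$ large compared to the $\rho_j$, so that $\beta(\theta_1,\theta_{\star,2})$ stays bounded away from $0$ while $\beta(\theta_2,\theta_{\star,2})$ is of order $\rho_2^2$), one gets $\beta(\theta_1,\theta_{\star,2})-\beta(\theta_2,\theta_{\star,2})\geq c_1\normfro{\theta_{\star,2}-\theta_{\star,1}}^2/B^2$ for an absolute constant $c_1>0$, and the analogous bound for $t<\tau_\star$. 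The penalty increments satisfy $|(\lambda_{j,t}-\lambda_{j,\tau_\star})\wp(\theta_j)|\lesssim\frac{|t-\tau_\star|}{T}\sqrt{\log(pT)/\min(\tau_\star,T-\tau_\star)}\,\wp(\theta_j)$ by (\ref{eq:lambda}), which by the minimum sample-size requirement (\ref{eq:cond:T:1}) is at most half of the Bregman contribution; hence
\[
\E\big[\Delta(t)\big]\geq c_2\,\frac{|t-\tau_\star|}{T}\,\frac{\normfro{\theta_{\star,2}-\theta_{\star,1}}^2}{B^2},\qquad t\in\mathcal{T},
\]
for an absolute $c_2>0$.

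The fluctuation $\Delta(t)-\E[\Delta(t)]$ is $\frac{1}{2T}$ times a sum of $|t-\tau_\star|$ i.i.d.\ centered quadratic forms $X^{(s)'}(\theta_1-\theta_2)X^{(s)}-\textsf{Tr}((\theta_1-\theta_2)\Sigma_{\star,2})$, which are sub-exponential. Writing $\theta_1-\theta_2=(\theta_{\star,1}-\theta_{\star,2})+N$ with $\normfro{N}\leq\rho_1+\rho_2$ and decomposing $N$ into its near-sparse estimation part (handled via $\|\cdot\|_1\lesssim\sqrt{s}\,\normfro{\cdot}$ together with the entrywise concentration of $\sum_s(X^{(s)}X^{(s)'}-\Sigma_{\star,2})$) and its unstructured $\epsilon$-part (handled by taking $\epsilon$ of order $p^{-1}$ and using a uniform bound on $\normfro{\sum_s(X^{(s)}X^{(s)'}-\Sigma_{\star,2})}$), a Bernstein inequality gives, on $E_0$,
\[
\PP\big(\Delta(t)\leq 0\big)\leq\PP\big(\Delta(t)-\E[\Delta(t)]\leq-\E[\Delta(t)]\big)\leq 2e^{-C_0|t-\tau_\star|},
\]
the two branches of the Bernstein bound matching the two terms in the definition of $C_0$. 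Summing over $t\in\mathcal{T}$ with $|t-\tau_\star|>\frac{4}{C_0}\log(p)$ (at most two values of $t$ at each distance, and $\sum_{n>\frac{4}{C_0}\log p}e^{-C_0 n}\leq p^{-4}/(1-e^{-C_0})$) yields the asserted probability. On the resulting event H\ref{H1} holds with $\kappa=0$ and $c=\frac{4}{C_0}\log(p)$, and since the step-size and initialization hypotheses of Theorem~\ref{thm1} are assumed, that theorem gives $\limsup_k|\tau^{(k)}-\tau_\star|\leq c/(1-\kappa)=\frac{4}{C_0}\log(p)$, i.e.\ (\ref{eq:limsup}).

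The main obstacle is the drift estimate together with the exact identification of $C_0$: one must keep $\beta(\theta_1,\theta_{\star,2})-\beta(\theta_2,\theta_{\star,2})$ uniformly positive despite the estimation errors $r_{j,\tau}$, which for $\tau$ far from $\tau_\star$ arise because $\hat\theta_{j,\tau}$ is fit on misspecified data --- precisely the situation that conditions (\ref{eq:cond:T:2}) and Lemma~\ref{lem2:thm2} are designed to control --- and one must track the variance and scale parameters of the quadratic forms $X^{(s)'}(\theta_1-\theta_2)X^{(s)}$ finely enough that the Bernstein exponent is exactly $C_0|t-\tau_\star|$. Making the estimate uniform over $\theta_1,\theta_2$ in the Frobenius $\epsilon$-ball is the remaining technical point; this forces $\epsilon$ to shrink with $p$, which is harmless since Theorem~\ref{thm1} only requires $\normfro{\theta_j^{(k)}-\hat\theta_{j,\tau^{(k)}}}\to0$, hence eventually below any fixed positive threshold. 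Everything else is routine perturbation analysis of $-\log\det$ and bookkeeping with hypotheses (\ref{eq:lambda})--(\ref{eq:tech:cond:thm2}).
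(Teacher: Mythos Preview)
Your overall architecture matches the paper exactly: verify H\ref{H1} with $\kappa=0$ and $c=\frac{4}{C_0}\log p$ on a high-probability event, then apply Theorem~\ref{thm1}. The expansion of $\Delta(t)=\H(t\vert\theta_1,\theta_2)-\H(\tau_\star\vert\theta_1,\theta_2)$, the drift lower bound via strong convexity of $-\log\det$ (your Bregman $\beta(\theta_1,\theta_{\star,2})-\beta(\theta_2,\theta_{\star,2})$ is precisely the quantity the paper bounds below using Lemma~\ref{lem2} Part~(1)), and the handling of the penalty increments $\lambda_{j,t}-\lambda_{j,\tau_\star}$ are all the same as in the paper.

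The fluctuation term is where your route diverges, and here there is a concrete gap. The paper does \emph{not} apply Bernstein to the scalar quadratic forms $X^{(s)'}(\theta_1-\theta_2)X^{(s)}$; it uses a single H\"older step,
\[
\left|\pscal{\theta_1-\theta_2}{\tfrac{1}{\tau-\tau_\star}\sum_{s=\tau_\star+1}^{\tau}\bigl(X^{(s)}X^{(s)'}-\Sigma_{\star,2}\bigr)}\right|\leq\|\theta_1-\theta_2\|_1\,\Bigl\|\tfrac{1}{\tau-\tau_\star}\sum_{s}\bigl(X^{(s)}X^{(s)'}-\Sigma_{\star,2}\bigr)\Bigr\|_\infty,
\]
and controls the $\ell_\infty$ deviation via Lemma~\ref{lem:dev:bound}. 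This buys two things simultaneously. First, it produces \emph{exactly} $C_0$: the threshold after H\"older is $\frac{\|\theta_2-\theta_1\|_{\textsf{F}}^2}{4B^2\|\theta_2-\theta_1\|_1}$, which when fed into Lemma~\ref{lem:dev:bound} yields the first branch of $C_0$, while the range restriction $\delta\leq 2(\underline\kappa/\bar\kappa)^2$ in that lemma is what produces the $(\underline\kappa/\bar\kappa)^4$ branch. Your assertion that ``the two branches of the Bernstein bound match the two terms in $C_0$'' is not correct as stated: a Hanson--Wright type bound on $\sum_s X^{(s)'}AX^{(s)}$ carries $\|\Sigma^{1/2}A\Sigma^{1/2}\|_{\textsf{F}}$ and $\|\Sigma^{1/2}A\Sigma^{1/2}\|_2$ in the exponent, not $\|A\|_1$; the appearance of $\|\theta_{\star,2}-\theta_{\star,1}\|_1$ in $C_0$ is precisely the signature of the $\ell_1/\ell_\infty$ pairing. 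Second, the paper's route gives uniformity over $\theta_1,\theta_2$ for free, since the only random object is the entrywise deviation, which does not depend on $\theta_1,\theta_2$, and the deterministic threshold is lower-bounded using only $\theta_{\star,1},\theta_{\star,2}$. This dispenses entirely with the decomposition of $\theta_1-\theta_2$ and the ``shrink $\epsilon$ like $p^{-1}$'' programme you flag as the remaining technical point.
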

\begin{proof}
See Section \ref{sec:proof:thm2}.
\end{proof}

\begin{remark}
The main point of the theorem is that under the assumptions and data generation mechanism described above, the containment assumption H\ref{H1} holds with probability as least $1-\frac{8}{pT} - \frac{4}{p^2\left(1-e^{-C_0}\right)}$, and where $\epsilon$ can be taken as $\min_\tau r_{1,\tau}\wedge r_{2,\tau}/\sqrt{p}$, $\kappa=0$, and $c = 4\log(p)/C_0$. Conclusion (\ref{eq:limsup}) is then simply a consequence of  Theorem \ref{thm1}.
\end{remark}

\subsection{A stochastic version}
When $T$ is much larger than $p$, Step 3 of Algorithm \ref{algo:1} becomes costly. In such cases, one can gain in efficiency by replacing Step 3 by a Monte Carlo approximation. We explore the use of simulated annealing to approximately solve Step 3 of Algorithm \ref{algo:1}. Given $\theta_1,\theta_2\in\M_p$, and $\beta>0$, let $\pi_{\beta,\theta_1,\theta_2}$ denote the probability distribution on $\mathcal{T}$ defined as
\[\pi_{\beta,\theta_1,\theta_2}(\tau) =\frac{1}{Z_{\beta,\theta_1,\theta_2}} \exp\left(-\frac{\mathcal{H}(\tau\vert\theta_1,\theta_2)}{\beta}\right),\;\;\tau\in \mathcal{T}.\]
Here, $Z_{\beta,\theta_1,\theta_2}$ is the normalizing constant, and $\beta>0$ is the cooling parameter, that we shall drive down to zero with the iteration to increase the accuracy of the Monte Carlo approximation. Direct sampling from $\pi_{\beta,\theta_1,\theta_2}$ is typically possible, but this has the same computational cost as Step 3 of Algorithm \ref{algo:1}. We will use a Markov Chain Monte Carlo approach which will allow us to make only a small number of calls of the function $\mathcal{H}$, per iteration. Let $\mathcal{K}_{\beta,\theta_1,\theta_2}$ denote a Markov kernel on $\mathcal{T}$ with invariant distribution $\pi_{\beta,\theta_1,\theta_2}$. Typically we will choose $\mathcal{K}_{\beta,\theta_1,\theta_2}$ as a Metropolis-Hastings Markov kernel (we give examples below).

We consider the following algorithm. As in Algorithm \ref{algo:1}, $\gamma$ is a given step-size.  We choose a  decrease sequence of temperature $\beta^{(k)}$ that we use along the iterations.

\begin{algorithm}\label{algo:2}
Fix a step-size $\gamma>0$, and a cooling sequence $\{\beta^{(k)}\}$. Pick some initial value $\tau^{(0)}\in\mathcal{T}$, $\theta_1^{(0)},\theta_2^{(0)}\in\M_p^+$.  Repeat for $k=1,\ldots,K$. Given ($\tau^{(k-1)}$, $\theta_1^{(k-1)}$, $\theta^{(k-1)}_2$), do the following:
\begin{enumerate}
\item Compute
\[\theta_1^{(k)} = \Prox_{\gamma\lambda_{1,\tau^{(k-1)}}}\left(\theta_1^{(k-1)}-\gamma\left(S_1(\tau^{(k-1)})-(\theta_1^{(k-1)})^{-1}\right)\right),\]
\item compute
\[\theta_2^{(k)} = \Prox_{\gamma\lambda_{2,\tau^{(k-1)}}}\left(\theta_2^{(k-1)}-\gamma\left(S_2(\tau^{(k-1)})-(\theta_2^{(k-1)})^{-1}\right)\right),\]
\item draw
\[\tau^{(k)}\sim \mathcal{K}_{\beta^{(k)},\theta_1^{(k)},\theta_2^{(k)}}(\tau^{(k-1)},\cdot).\]
\end{enumerate}
\begin{flushright}
$\square$
\end{flushright}
 \end{algorithm}

For most commonly used MCMC kernels, each iteration of Algorithm \ref{algo:2} has a computatinal cost of $O(p^3)$, which is better than $O(p^3 +Tp^2)$ needed by Algorithm \ref{algo:1}, when $T\geq p$. However Algorithm \ref{algo:2} travels along the change-point space $\mathcal{T}$ more slowly. Hence overall, a larger number of iterations would typically be needed for Algorithm \ref{algo:2} to converge. Even after accounting for this slow convergence, Algorithm \ref{algo:2} is still substantially faster than Algorithm \ref{algo:1}, as shown in Table  \ref{table:runtime_100} and \ref{table:runtime_500}. A rigorous analysis of the convergence of Algorithm \ref{algo:2} is beyond the scope of this work, and it left as a possible future research.




%
%
%

\subsection{Extension to multiple change-points}\label{sec:bs}
We extend the method to multiple change-points by binary segmentation. Binary segmentation is a standard method for detecting multiple change-points. The method proceeds by first searching for a single change-point. When a change-point is found the data is split into the two parts defined by the detected change-point. A similar search is then performed on each segment which can result in further splits. This recursive procedure continues until a certain stopping criterion is satisfied. Here we stop the recursion if

\[\ell_{\tau} +Cp \geq  \ell_F ,\]

where $\ell_{\tau}$ is the penalized negative log-likelihood obtained with the additional change-point $\tau$, and $\ell_F$ is the penalized negative log-likelihood without the change-point. The term $Cp$ is a penalty term for model complexity, where $C$ is a user-defined parameter. As we show in the simulations, values of $C$ between $(0,4)$ seem to produce the best results in our setting.  

\section{Numerical experiments}\label{sec:num:exp}
We investigate the different algorithms presented here in a variety of settings.  For all the algorithms investigated the choice of the step-size $\gamma$ and the regularizing parameter  $\lambda$ are important.  For all experiments, and as suggested by (\ref{eq:lambda}), we found that setting $\lambda_{1,\tau} = \lambda \sqrt{\frac{\log\{p\}}{\tau}}$ and $\lambda_{2,\tau} = \lambda \sqrt{\frac{\log\{p\}}{T - \tau}}$ worked well.  For the time-comparison in Section \ref{sec:comp}  we used $\lambda = 0.1$ and $\gamma = 3.5$  when $T = 1000$, and we used $\lambda = 0.01$ and $\gamma = 3.5$ when $T = 500$.  For the remainder of the experiments we set $\lambda = 0.13$ and $\gamma = 0.25$.  For the minimum sample size $n_0$, we found that taking $n_0$ from  $\{0.01T, 0.05T, 0.1T\}$ worked well.

We initialize $\tau^{(0)}$ to a randomly selected value in $\mathcal{T}$. The initial value $\theta_1^{(0)}$ and $\theta_2^{(0)}$ are taken as $\theta_j^{(0)} = (S_j(\tau^{(0)}) + \epsilon I)^{-1}$ where $\epsilon$ is a constant chosen to maintain positive definiteness. For cases where $p < \tau$ and $p < T - \tau$ we used $\epsilon = 0$, while for larger values of $p$ we set $\epsilon = 0.2$.

For the data generation in the simulations,  we typically choose $\tau_\star=T/2$ unless otherwise specified, and unless otherwise specified, we generate independently the matrices $\theta_{\star,1}$ and $\theta_{\star,2}$ as follows. First we generate a random symmetric sparse matrix $M$ such that the proportion of non-zero entries is 0.25.  We add 4 to all positive entries and subtract 4 from all negative entries.  Then we set the actual precision matrix as $\theta_{\star,j} = M + (1 - \lambda_{\min}(M))I_p$ where $\lambda_{\min}(M)$ is the smallest eigenvalue of $M$. The resulting precision matrices contain roughly $25\%$ non-zero off-diagonal elements. For each simulation a new pair of precision matrices was generated as well as the corresponding data set.

For Algorithm \ref{algo:2} we also experimented with a number of MCMC kernel $\mathcal{K}_{\beta,\theta_1,\theta_2}$. We experiment with the independence Metropolis sampler with proposal $\textbf{U}(n_0, T - n_0)$. We  also tried a Random Walk Metropolis with a truncated Gaussian proposal $\textbf{N}(\tau^{(k-1)}, \sigma^2)$, for some scale parameter $\sigma>0$. Finally, we also experimented with a mixture of these two Metropolis-Hastings kernels. We found that for our simulations the Independent Metropolis kernel works best, although the mixture kernel also performed well. For the cooling schedule of simulated annealing we use  $\beta^{(0)} = 1$, and a geometric decay $\beta^{(n)} = \alpha \beta^{(n-1)}$ with $\alpha = \left(\frac{\beta^{(M)}}{\beta^{(0)}}\right)^{1/M}$ where $\beta^{(M)}=0.001$, and $M$ is the maximum number of iterations.

\subsection{Time comparison}\label{sec:comp}

First we compare the running times of the proposed algorithms and the brute force approach. We consider two settings: $(p=100,T=1000)$ and $(p=500,T=500)$. In the setting $(p=100,T=1000)$, 100 independent runs of Algorithms \ref{algo:1} and \ref{algo:2} are performed and the average run-times are reported in Table \ref{table:runtime_100}. In the setting $(p=500,T=500)$ 10 independent runs of Algorithms \ref{algo:1} and \ref{algo:2} are used, and the results are presented in Table \ref{table:runtime_500}.   We compare these times to results from one simulation run of the brute-force approach. 

We consider two  stopping criterion for Algorithm \ref{algo:1} or \ref{algo:2}. The first criteria stops the iterations of
\[\frac{1}{T}|\tau^{(k)} - \tau_\star| < 0.005 \;\;\mbox{ and }\;\; \frac{\|\theta_1^{(k)} - \hat{\theta}_1\|_F}{\|\hat{\theta}_1\|_F} + \frac{\|\theta_2^{(k)} - \hat{\theta}_2\|_F}{\|\hat{\theta}_2\|_F} < 0.05,\;\;\;\tag{V1}\]  
where $\hat{\theta}_1$ and $\hat{\theta}_2$ are obtained by performing 1000 proximal-gradient steps at the true $\tau$ value. An interesting feature of the proposed approximate MM algorithms is that the change-point sequence $\tau^{(k)}$ can converge well before $\theta^{(k)}_1$ and $\theta_2^{(k)}$. To illustrate this, we also explore the alternative approach of stopping the iterations only based on $\tau^{(k)}$,  namely when
\[\frac{1}{T}|\tau^{(k)} - \tau_\star| < 0.005.\;\;\;\tag{V2}\]
Finally, we note that we implement the brute force approach by running $500$ proximal-gradient steps for each possible value of $\tau$. Note that $500$ iterations is typically smaller than the number of iterations needed to satisfy (V1). 

Tables \ref{table:runtime_100} and \ref{table:runtime_500} highlight the benefits of Algorithm \ref{algo:1} and Algorithm \ref{algo:2} as the run-time is several orders of magnitude lower than the brute force approach.  Additionally, while Algorithm \ref{algo:2} requires more iterations than Algorithm \ref{algo:1} its run-time is typically smaller.  The benefits of Algorithm \ref{algo:2} are particularly clear for large values of $p$ and $T$ (under stopping criterion (V1)). The stopping criteria (V2) highlights the fact that the $\tau^{(k)}$ sequence in the proposed algorithms can converge well before the $\theta$-sequences. 


\begin{table}
\begin{tabular}{l l | c c c }
Variant & & Brute Force & Approx. MM & Simulated Annealing \\
\hline
(V1) & Time (Seconds) & 550.34 & 160.05 & 7.02 \\
 & Iterations & - & 573.41 & 598.71 \\
\hline
(V2) & Time (Seconds) & - & 3.16 &  2.96 \\
 & Iterations & - & 1.12 & 100.20 \\
\hline
\end{tabular}
\caption{Run-time Comparison $(p=100,T=1000)$}
\label{table:runtime_100}
\end{table}

\begin{table}
\begin{tabular}{l l | c c c }
Variant & & Brute Force & Approx. MM & Simulated Annealing \\
\hline
(V1) & Time (Seconds) & 19205.61 & 7017.64 & 258.51 \\
 & Iterations & - & 961.40 & 962.20 \\
\hline
(V2) & Time (Seconds) & - & 187.36 &  167.21 \\
 & Iterations & - & 1.90 & 131.10 \\
\hline
\end{tabular}
\caption{Run-time Comparison $(p=500,T=500)$}
\label{table:runtime_500}
\end{table}

\subsection{Behavior of the algorithm when the change-point is at the edge}
We investigate how the brute force algorithm, Algorithm \ref{algo:1}, and Algorithm \ref{algo:2} perform when change-points are non-existent or close to the edges.  The results for the brute force algorithm are presented in Figure \ref{fig:bf_edge_cp}, the results for Algorithm \ref{algo:1} are presented on Figure \ref{fig:ro_edge_cp} and the results for Algorithm \ref{algo:2} are presented on Figure \ref{fig:sa_edge_cp}.  For Algorithm \ref{algo:1} and Algorithm \ref{algo:2} the figure contains two subfigures, the first showing the trajectories of the sequences $\{\tau^{(k)}\}$ produced by the algorithm, and the second showing a histogram of the final location of the estimated $\tau$, based on $200$ replications.  Additionally, a line is included to show the location of the true $\tau$.  For the brute force algorithm the trace plot is removed.  The results suggest that Algorithm \ref{algo:1} and Algorithm \ref{algo:2} have more trouble when the true $\tau$ is close to the edge of the sample.


\begin{figure}[ht]
\centering     
\subfigure[No change-point]{\label{fig:a}\includegraphics[width=0.45\textwidth]{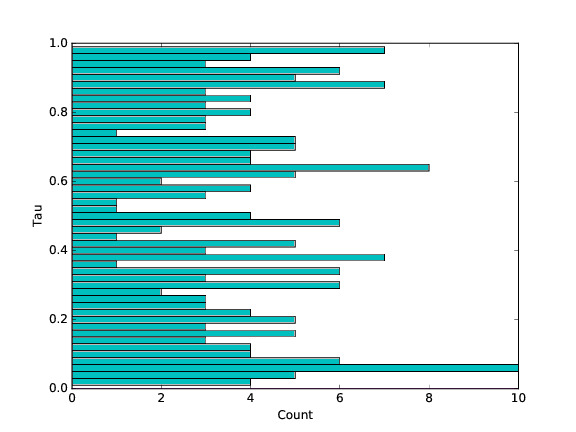}}
\subfigure[Change-point at $\tau = 0.1T$]{\label{fig:b}\includegraphics[width=0.45\textwidth]{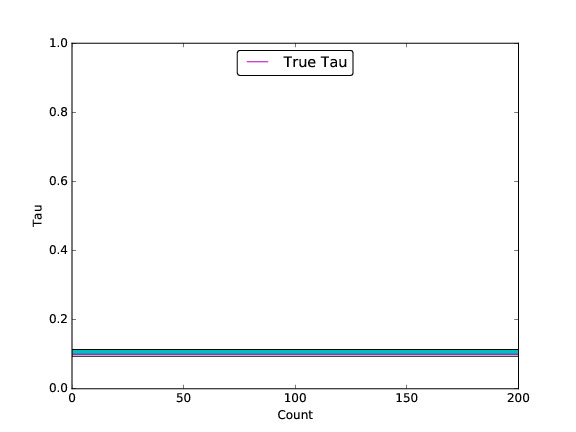}}
\subfigure[Change-point at $\tau = 0.25T$]{\label{fig:c}\includegraphics[width=0.45\textwidth]{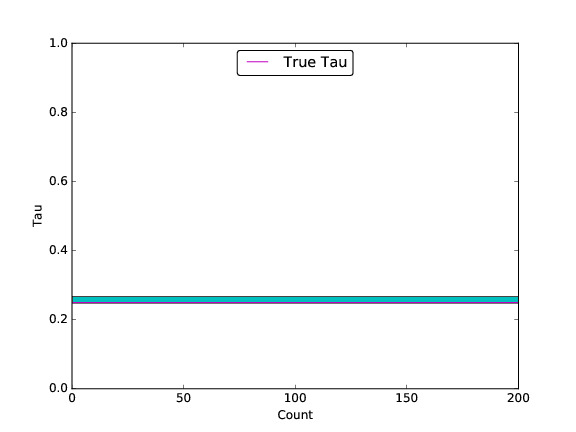}}
\subfigure[Change-point at $\tau = 0.5T$]{\label{fig:d}\includegraphics[width=0.45\textwidth]{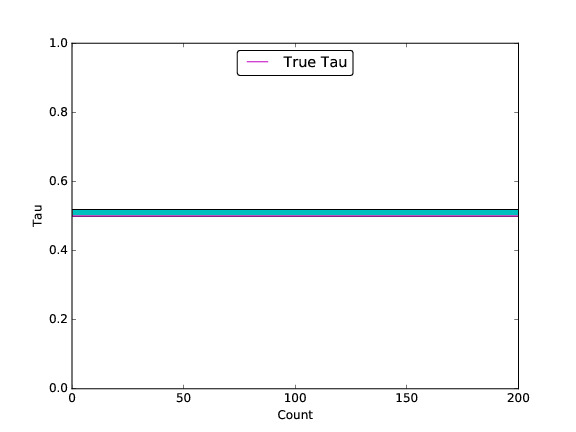}}
\caption{Change-point close to the edge. Results for the brute force approach.}
\label{fig:bf_edge_cp}
\end{figure}

\begin{figure}[ht]
\centering     
\subfigure[No change-point]{\label{fig:a}\includegraphics[width=0.45\textwidth]{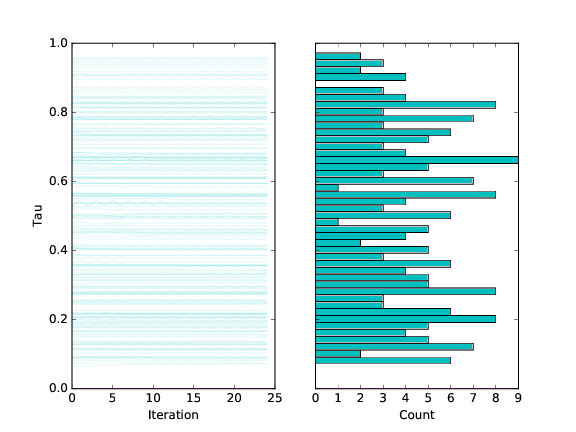}}
\subfigure[Change-point at $\tau = 0.1T$]{\label{fig:b}\includegraphics[width=0.45\textwidth]{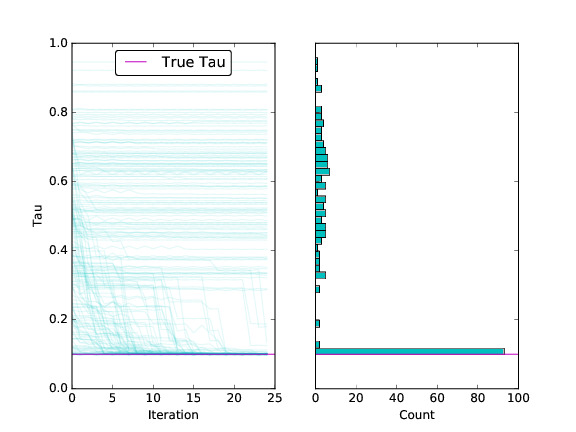}}
\subfigure[Change-point at $\tau = 0.25T$]{\label{fig:c}\includegraphics[width=0.45\textwidth]{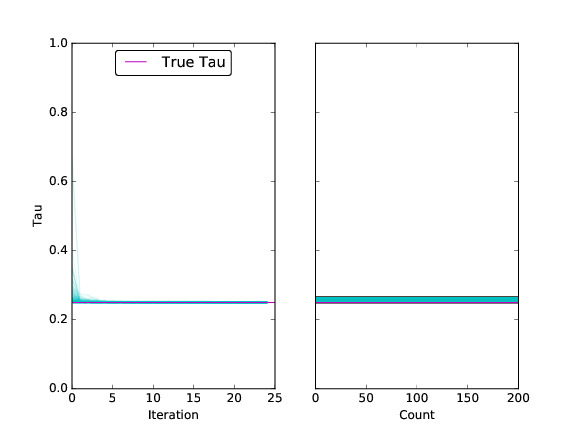}}
\subfigure[Change-point at $\tau = 0.5T$]{\label{fig:d}\includegraphics[width=0.45\textwidth]{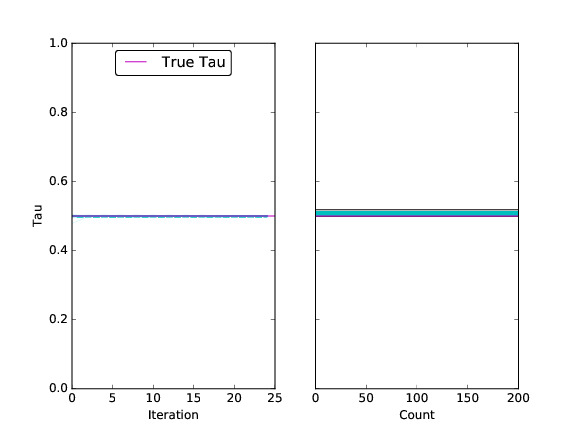}}
\caption{Change-point close to the edge. Results for Algorithm \ref{algo:1}.}
\label{fig:ro_edge_cp}
\end{figure}

\begin{figure}[ht]
\centering     
\subfigure[No change-point]{\label{fig:sa_a}\includegraphics[width=0.45\textwidth]{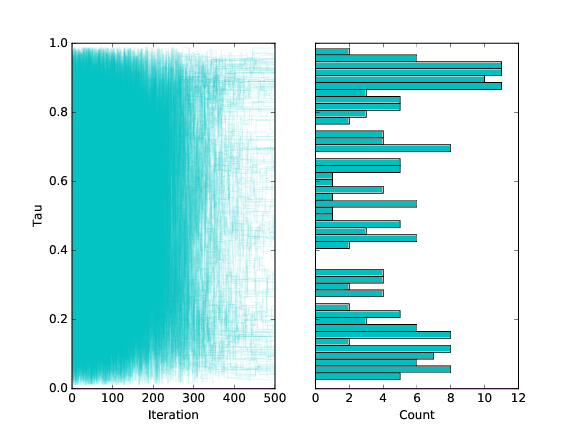}}
\subfigure[Change-point at $\tau = 0.1T$]{\label{fig:b}\includegraphics[width=0.45\textwidth]{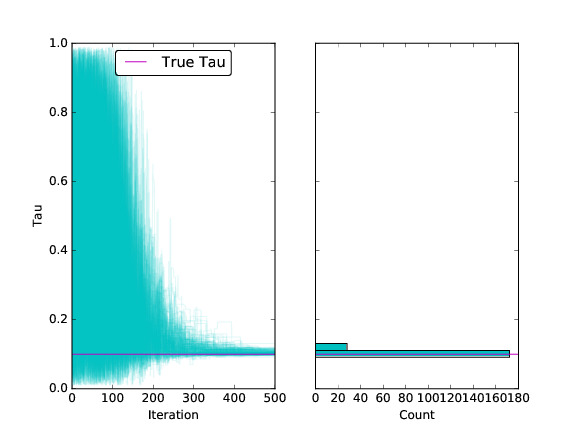}}
\subfigure[Change-point at $\tau = 0.25T$]{\label{fig:c}\includegraphics[width=0.45\textwidth]{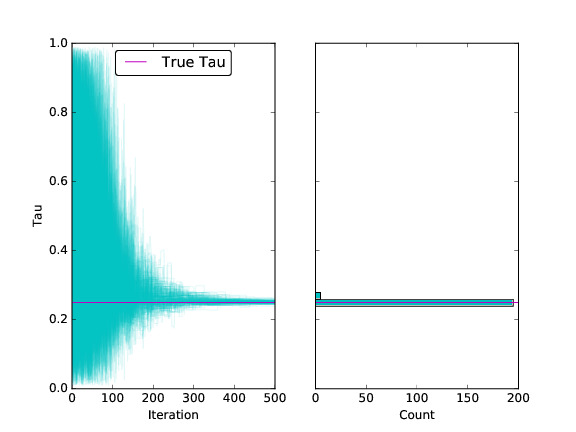}}
\subfigure[Change-point at $\tau = 0.5T$]{\label{fig:d}\includegraphics[width=0.45\textwidth]{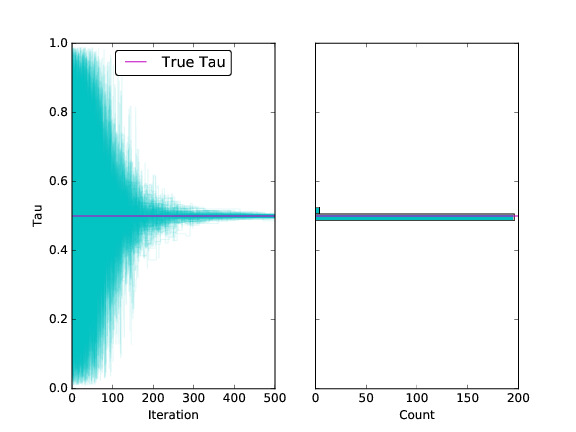}}
\caption{Change-point close to the edge. Results for Algorithm \ref{algo:2}.}
\label{fig:sa_edge_cp}
\end{figure}

\clearpage

\subsection{Behavior of the algorithms when $\theta_1$ and $\theta_2$ are similar}
As $\theta_1$ and $\theta_2$ get increasingly similar, the location of the change-point becomes increasingly more difficult to find. We investigate the behavior of the proposed algorithms in such settings. We generate the true precision matrices $\theta_1$ and $\theta_2$ as follows. We draw a random precision matrix $\theta$ with $q\%$ non-zero off-diagonal elements, and $C_1$ and $C_2$ two random precision matrix with $p\%$ non-zero off-diagonal elements. We choose $C_1$ and $C_2$ to have the same diagonal elements. Then we set  $\theta_1 = \theta + C_1$ and $\theta_2 = \theta + C_2$, which are then used to generate the dataset for the experiment.  
The ratio $p/q$ is a rough indication of the signal. See Figure \ref{fig:bf_sim_comp} for a comparison of the performance for different values for $q$ and $p$ for the brute force algorithm, Figure \ref{fig:ro_sim_comp} for Algorithm \ref{algo:1}, and Figure \ref{fig:sa_sim_comp} for Algorithm \ref{algo:2}.

\begin{figure}[h]
\centering     
\subfigure[$q = 25$, $p = 0$]{\label{fig:a}\includegraphics[width=0.45\textwidth]{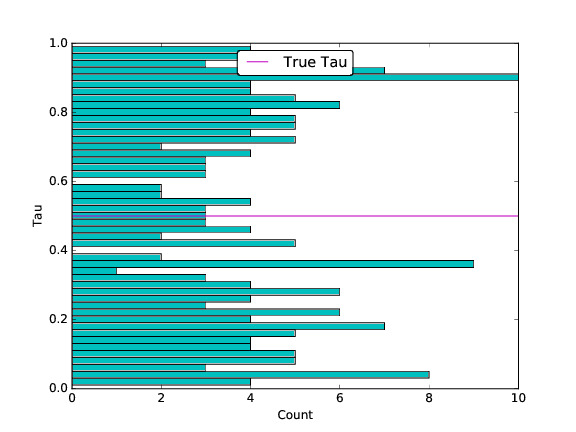}}
\subfigure[$q = 17.5$, $p = 7.5$]{\label{fig:c}\includegraphics[width=0.45\textwidth]{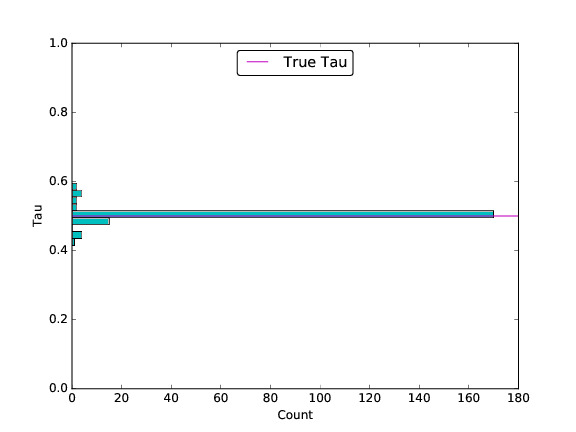}}
\subfigure[$q = 10$, $p = 15$]{\label{fig:d}\includegraphics[width=0.45\textwidth]{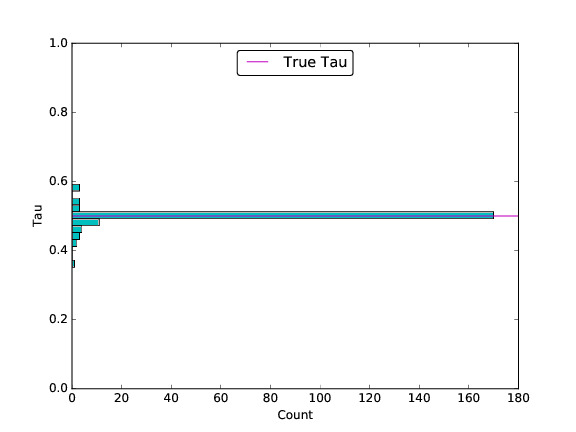}}
\subfigure[$q = 0$, $p = 25$]{\label{fig:e}\includegraphics[width=0.45\textwidth]{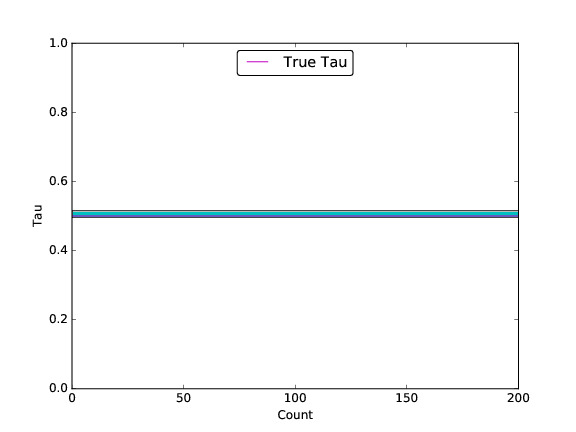}}
\caption{Behavior of the brute force approach when $\theta_1$ and $\theta_2$ are similar.}
\label{fig:bf_sim_comp}
\end{figure}

\begin{figure}[h]
\centering     
\subfigure[$q = 25$, $p = 0$]{\label{fig:a}\includegraphics[width=0.45\textwidth]{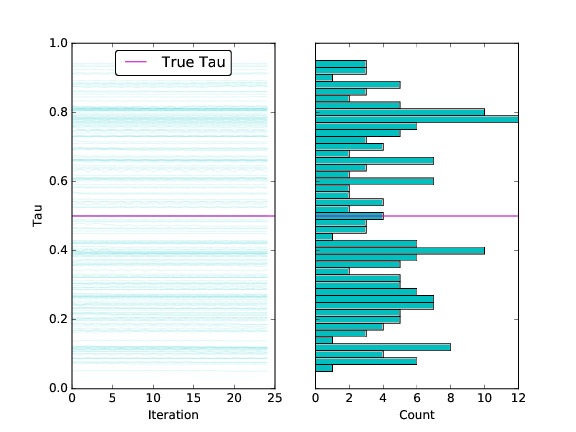}}
\subfigure[$q = 17.5$, $p = 7.5$]{\label{fig:c}\includegraphics[width=0.45\textwidth]{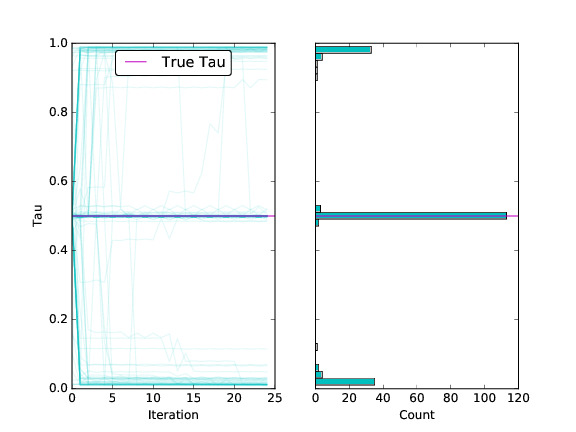}}
\subfigure[$q = 10$, $p = 15$]{\label{fig:d}\includegraphics[width=0.45\textwidth]{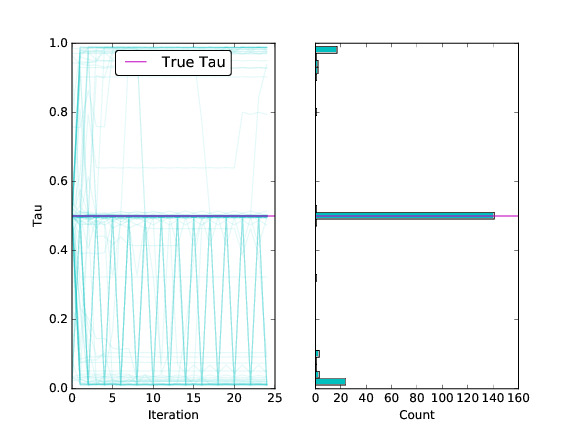}}
\subfigure[$q = 0$, $p = 25$]{\label{fig:e}\includegraphics[width=0.45\textwidth]{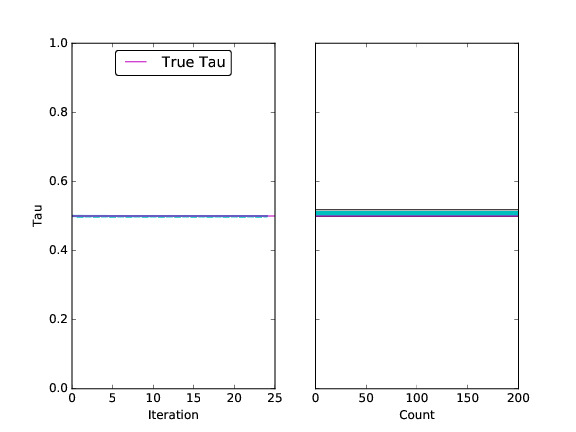}}
\caption{Behavior of Algorithm \ref{algo:1} when $\theta_1$ and $\theta_2$ are similar.e}
\label{fig:ro_sim_comp}
\end{figure}

\begin{figure}[h]
\centering     
\subfigure[$q = 25$, $p = 0$]{\label{fig:a}\includegraphics[width=0.45\textwidth]{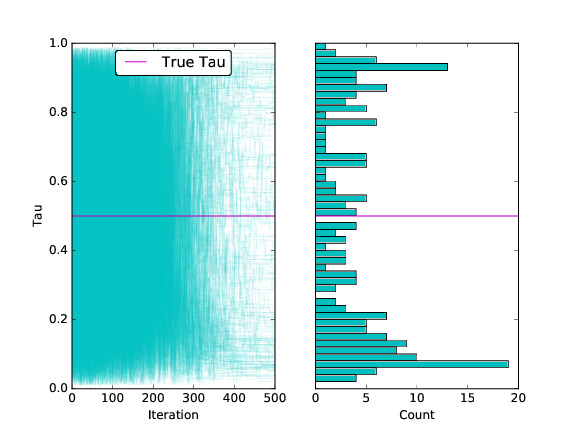}}
\subfigure[$q = 17.5$, $p = 7.5$]{\label{fig:c}\includegraphics[width=0.45\textwidth]{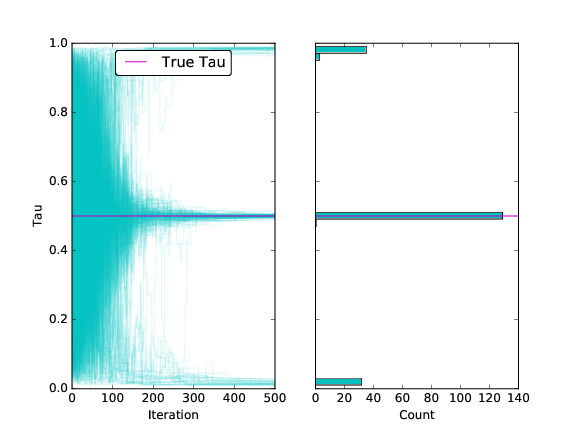}}
\subfigure[$q = 10$, $p = 15$]{\label{fig:d}\includegraphics[width=0.45\textwidth]{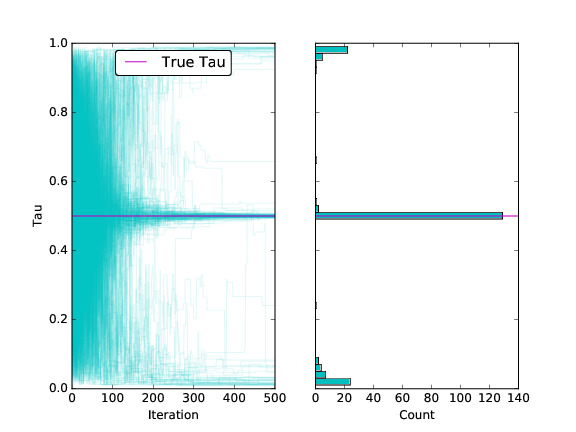}}
\subfigure[$q = 0$, $p = 25$]{\label{fig:e}\includegraphics[width=0.45\textwidth]{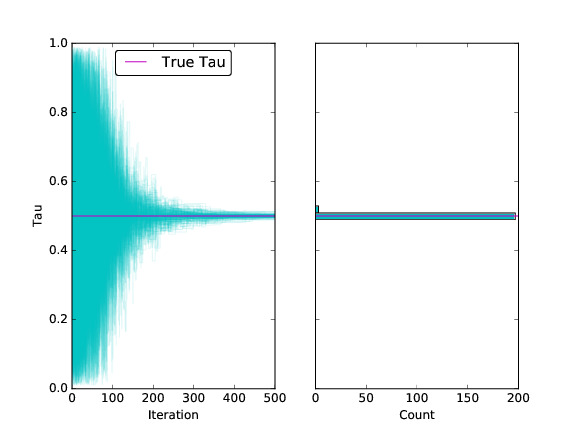}}
\caption{Behavior of Algorithm \ref{algo:2} when $\theta_1$ and $\theta_2$ are similar.}
\label{fig:sa_sim_comp}
\end{figure}

\subsection{Sensitivity to the stopping Criteria in binary segmentation}
This section considers the stopping condition for the binary segmentation algorithm (see Section \ref{sec:bs}) and how it performs with different configurations.  A condition is required for determining when the binary segmentation splitting should reject a change-point and stop running.  The stopping condition that we use is the following, stop if

\[\ell_{\tau} +Cp \geq  \ell_F ,\]

where $\ell_{\tau}$ is the penalized negative log-likelihood obtained with the additional change-point $\tau$, and $\ell_F$ is the penalized negative log-likelihood without the change-point. The term $C$ is a user-defined parameter.

As mentioned above, the proposed algorithms can diverge when the step-size $\gamma$ is not appropriately selected. Tuning $\gamma$ in the binary segmentation setting  presents some challenge since the splitting of the data can result in data segments with very different lengths. Here we have chosen not to tune $\gamma$ to the data segment, and to stop the binary segmentation splitting if the sequence $\hat{\theta}_1^{(k)}$ or $\hat{\theta}_2^{(k)}$ appear to diverge. We  found that stopping the algorithm when $||\hat{\theta_i}^{(k)}||_2^2 > 2\times 10^3$ was sufficient for our data.

In the binary segmentation, since the estimates of $\theta_1$ and $\theta_2$ may not have converged by the end of the search for $\tau$ it may be worth continuing the estimation procedure for $\theta_1$ and $\theta_2$ so that the resulting penalized log-likelihoods are comparable.  Hence after each split from the binary segmentation search, we perform an additional 500 iterations to estimate $\theta_1$ and $\theta_2$ at the resulting $\tau$.  


See Figure \ref{fig:tau_config} for a series of heatmaps showing how often the binary segmentation method finds a given number of change-points for different values of $C$.  These results suggest  that the choice of $C$ in the interval $(0,4)$ is reasonable.  These results are produced using Algorithm \ref{algo:2} for speed, however, the results are identical for the other two algorithms considered.


\begin{figure}[h]
\centering     
\includegraphics[width=\textwidth]{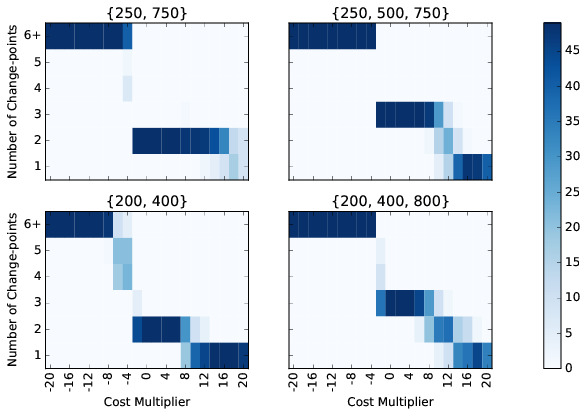}
\caption{Number of change-points detected by binary segmentation as function of the cost multiplier $C$. The number of true change-points is indicated on top of the plots.}
\label{fig:tau_config}
\end{figure}

\subsection{Large scale experiments}
We also investigate the behavior of the proposed algorithms for larger values of $p$. We performed several (100) runs of Algorithm \ref{algo:2} for $T=1000$, and $p\in\{500, 750, 1000\}$. From these $100$ runs we estimate the distributions of the iterates (by boxplots) after $10,100,200,\ldots,1000$ iterations.  The results are presented in figure \ref{fig:sp_larger_p}.  The results show again a very quick convergence toward $\tau_\star$.

\begin{figure}[h]
\centering     
\includegraphics[width=\textwidth]{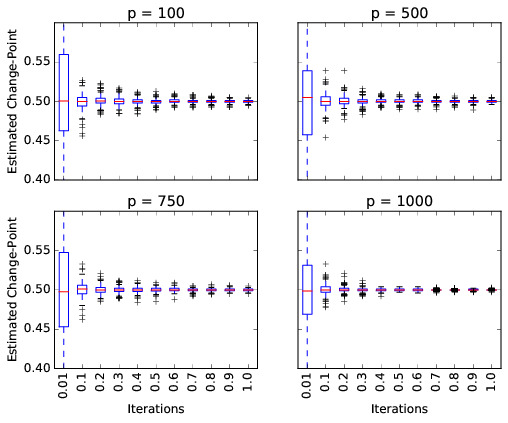}
\caption{Change-point Estimates for Larger $p$}
\label{fig:sp_larger_p}
\end{figure}

\subsection{A real data analysis}\label{sec:real:data}

In finance and econometrics there is considerable interest in regime-switching models in the context of volatility, particularly because these switches may correspond to real events in the economy (\cite{banerjeeurga05,beltrattimorana06,gunay14,choietal10}).  However, much of the literature is limited to the low dimensional case, due to the difficulty involved in estimating change-points for higher dimensions.  We are able to use our method extend this work by estimating change-points in the covariance structure of the S\&P 500.

Data from the S\&P 500 was collected for the period from 2000-01-01 to 2016-03-03.  From this initial sample a subset of tickers was selected for which at least 3000 corresponding observations exist.  This produced a sample extending from 2004-02-06 to 2016-03-03, consisting of 3039 observations and 436 tickers.  We follow the data cleaning procedure from \cite{laffertyetal12}.  For each ticker we generate the log returns $\log{\frac{X_t}{X_{t-1}}}$ and standarizing the resulting returns.   We then threshold any values more than three standard deviations away from the mean.

See Figure \ref{fig:full_cp} for a plot of the binary segmentation search path.  For each segment, the corresponding simulated annealing algorithm was run 50 times to produce a plot of the trace.  The blue line in each plot shows the selected change-point, while the red lines show the edge of the searched segment.  The cyan lines show the trace for each simulated annealing run.  For this setting $\lambda = 0.002$ and $\gamma = 0.5$.  We initialize $\hat{\theta}^{(0)} = (S(\tau^{(0)}) + I \epsilon)^{-1}$ where $\epsilon = 10^{-4}$ and $\tau^{(0)}$ is selected randomly.  After the simulated annealing run the proximal gradient algorithm was run an additional 2000 steps, to produces estimates of $\theta_1$ and $\theta_2$.  Here we increase the step-size to $\gamma = 350$ to accelerate the convergence.  For the binary segmentation we found that selecting the threshold constant, $C = 0.005$, found a reasonable set of change-points.

\begin{figure}[h]
\centering     
\includegraphics[width=\textwidth]{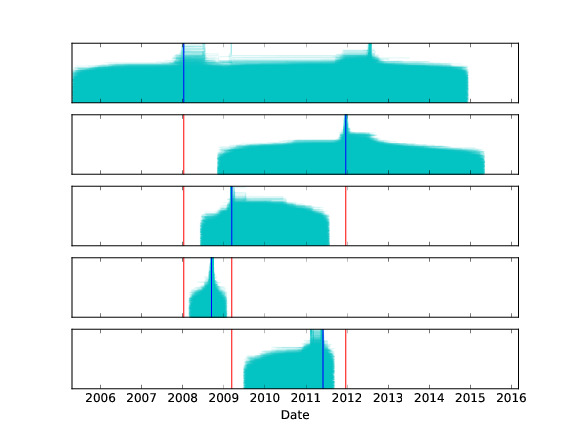}
\caption{Simulated Annealing Trace}
\label{fig:full_cp}
\end{figure}

We next look at how well the estimated change-points correspond to real world events.  Our change-point set seems to do a good job of capturing both the Great Recession and a fall in stock prices during August of 2011 related to the European debt crisis and the downgrading of United State's credit-rating.  The first change-point in our set is January 11th 2008.  The National Bureau of Economic Research (NBER) identifies December of 2007 as the beginning of the Great Recession, which this change-point seems to capture.  Additionally, 10 days after the change-point, the Financial Times Stock Exchange (FTSE) would experience its biggest fall since September 11th 2001.  The second change-point occurred on September 15th 2008, the day on which Lehman Brothers filed for bankruptcy protection, one of the key events of the Great Recession.  The third change-point takes place on March 16th 2009, corresponding to the end of the bear market in the United States.  To get a better sense of the importance of the fourth and fifth change-points see Figure \ref{fig:ted}.  Figure \ref{fig:ted} shows a plot of all the change-points overlaid on the TED spread for our sample.  The TED spread corresponds to the difference between the 3-year LIBOR rate and the 3-year T-bill interest rate.  It is commonly used as a measure of the general credit risk of the economy.  The fourth change-point, on June 1st 2011, and the fifth change-point, on December 21st 2011, likely capture a period of heightened concerns over the possible spread of the European debt crisis to Spain and Italy, during August of 2011.  This period also saw the downgrading of the S\&P's credit rating of the United States from AAA to AA+.  The fourth and fifth change-points, bookend a period of increase in the TED spread, corresponding to these events.

\begin{figure}[h]
\centering     
\includegraphics[width=\textwidth]{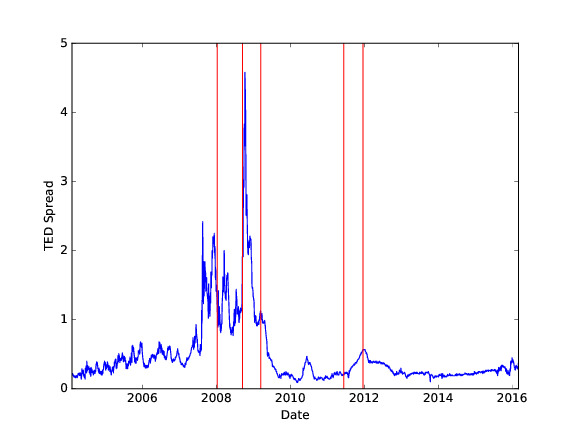}
\caption{TED Spread}
\label{fig:ted}
\end{figure}


Given that the change-point set identified seems sensible, we then investigate what the corresponding $\hat{\theta}$ estimates look like, and whether any interesting conclusions can be drawn from our estimates.  See Figure \ref{fig:segments} for a plot of the adjacency matrix for each $\hat{\theta}$ estimate.  The yellow boxes correspond to Global Industry Classification Standard (GICS) sectors.  These results tell an intuitive story about how the economy behaves during financial crises.  Following both the collapse of Lehamn Brother's and the events of August 2011, we see a dramatic increase in connectivity between returns even outside of GICS sectors.  To get a better sense of this see Figure \ref{fig:indedges} for a similar series of plots where edges are summed over each sector.  Figure \ref{fig:fin_edges} gives an expanded version of the summed edge plot for the first $\hat{\theta}$ estimate, as well as the corresponding sector labels for reference.  Again, we can see that during periods of crisis, the off diagonal elements --corresponding to edges between different sectors -- become more significant than during periods of general stability.


\begin{figure}[h]
\centering     
\includegraphics[width=\textwidth]{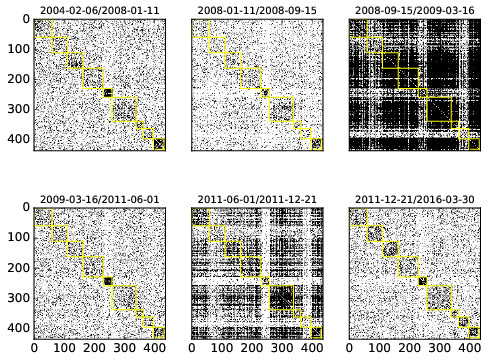}
\caption{$\hat{\theta}$ Adjaceny Matrices}
\label{fig:segments}
\end{figure}




\begin{figure}[h]
\centering     
\includegraphics[width=\textwidth]{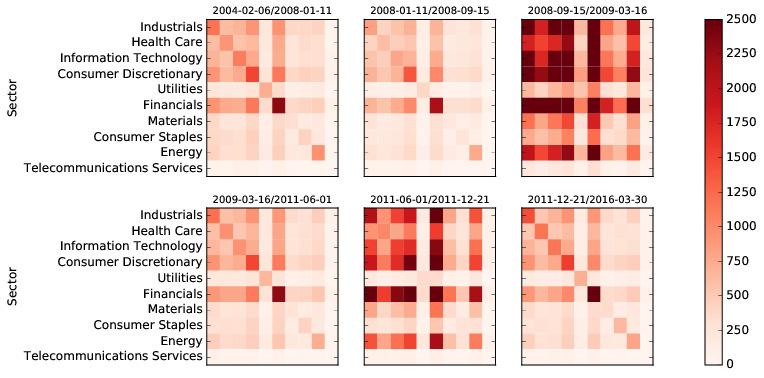}
\caption{Sector Edges}
\label{fig:indedges}
\end{figure}


From these figures we can get a sense of which sectors are most affected during times of crisis.  To expand upon this some, see Figure \ref{fig:fin_edges} for the edge count between each sector and the Financial sector for each $\hat{\theta}$ estimate.  We can see that during times of crisis, there is considerable connection between Industrials, Information Technology, Consumer Discretionary, and to a lesser extend Healthcare, and the Financial sector.  Consumer Staples, Utilities, and Materials appear to be more stable during these periods and do not experience as much correlation with Financials.  This might suggest that our method could be used as a tool to identify investment strategies that are likely to be resilient to periods of crisis in the market.

\begin{figure}[h]
\centering     
\includegraphics[width=\textwidth]{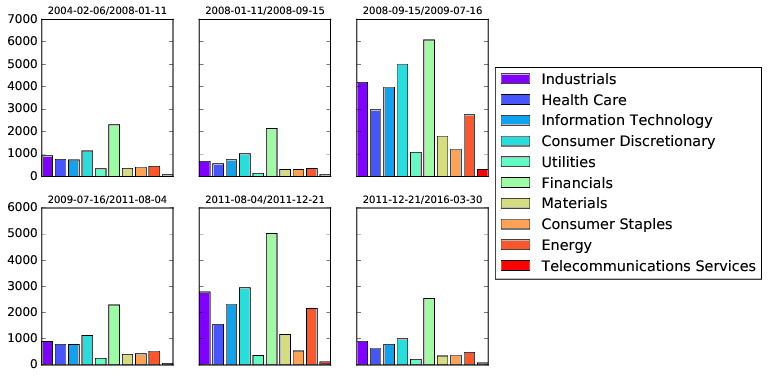}
\caption{Financial Sector Edges}
\label{fig:fin_edges}
\end{figure}

\section{proofs}\label{sec:proofs}
\subsection{Proof of Lemma \ref{lem1}}\label{sec:proof:lem1}
The proof is similar to the proof of Lemma 2 of \cite{atchade:etal:15}. We do the proof for $j=1$, the case $j=2$ being similar. Suppose that $\theta_1^{(k)}$ is non-singular. It is well known that
\begin{multline*}
\theta_1^{(k+1)} = \textsf{Argmin}_{u\in\M_p}\left[\pscal{\nabla g_{1,\tau^{(k)}}(\theta_1^{(k)})}{u-\theta_1^{(k)}} +\frac{1}{2\gamma}\normfro{u-\theta_1^{(k)}}^2 +\lambda_{1,\tau^{(k)}}\wp(u)\right].\end{multline*}
The optimality conditions of this problem implies that there exists $Z\in\rset^{p\times p}$, where $Z_{ij}\in [-1,1]$ for all $i,j$ such that
\begin{multline*}
\left(1+(1-\alpha)\lambda_{1,\tau^{(k)}}\gamma\right)\theta_1^{(k+1)} =\theta_1^{(k)} + \frac{\gamma\tau^{(k)}}{2T}\left(\theta_1^{(k)}\right)^{-1} -\gamma \left(\frac{\tau^{(k)}}{2T}S_1(\tau^{(k)}) + \alpha\lambda_{1,\tau^{(k)}}Z\right).\end{multline*}
Hence, if $\lambda_{\textsf{min}}(\theta_1^{(k)})\geq \textsf{b}_1$, and $b_1^2\geq \gamma\tau/(2T)$ (which holds true if $\gamma\leq 2\textsf{b}_1^2$), and using the fact that $\lambda_{\textsf{min}}(A+B) \geq \lambda_{\textsf{min}}(A) +\lambda_{\textsf{min}}(B)$, we get 
\[\lambda_{\textsf{min}}(\theta_1^{(k+1)})\geq \frac{1}{1+(1-\alpha)\bar\lambda_1\gamma}\left(\textsf{b}_1 +\frac{\gamma n_0}{2T} \frac{1}{\textsf{b}_1} - \gamma\mu_1\right)=\textsf{b}_1,\]
where the last equality follows from the fact that the chosen $\textsf{b}_1$ satisfies
\[(1-\alpha)\bar\lambda_1 \textsf{b}_1^2 +\mu_1 \textsf{b}_1 -\frac{n_0}{2T}=0.\]
Similarly, if $\lambda_{\textsf{max}}(\theta_1^{(k)})\leq \textsf{B}_1$, then
\[\lambda_{\textsf{max}}(\theta_1^{(k+1)})\leq \frac{1}{1+(1-\alpha)\underline{\lambda}_1\gamma}\left(\textsf{B}_1 +\frac{\gamma}{2} \frac{1}{\textsf{B}_1} + \gamma\mu_1\right)=\textsf{B}_1,\]
where the last equality follows from the fact that the chosen $\textsf{B}_1$ satisfies
\[(1-\alpha)\underline{\lambda}_1 \textsf{B}_1^2 -\mu_1 \textsf{B}_1 -\frac{1}{2}=0.\]

The argument that $\hat\theta_{j,\tau}\in \M_p^+(\textsf{b}_j,+\infty)$ is similar, and the details can be found for instance in the proof of Lemma 1 of \cite{atchade:etal:15}.
\begin{flushright}
$\square$
\end{flushright}

\subsection{Proof of Theorem \ref{thm1}}\label{sec:proof:thm1}
We will need the following lemma.
\begin{lemma}\label{lem2}
Set
\begin{multline*}
g(\theta)\eqdef -\log\det(\theta) +\textsf{Tr}(\theta S),\;\;\;\\
\mbox{ and }\;\; \phi(\theta)\eqdef g(\theta) +\lambda\left[\alpha\|\theta\|_1 +\frac{1-\alpha}{2}\normfro{\theta}^2\right],\;\;\theta\in\M_p^+,
\end{multline*}
for  some symmetric matrix $S$, $\alpha\in (0,1)$, and $\lambda>0$. Fix $0<b<B\leq \infty$.
\begin{enumerate}
\item For $\theta,\vartheta\in\M_p^+(b,B)$, we have
\begin{multline*}
g(\theta)+\pscal{\nabla g(\theta)}{\vartheta-\theta} +\frac{1}{2B^2}\normfro{\vartheta-\theta}^2\leq g(\vartheta)\\
\leq g(\theta)+\pscal{\nabla g(\theta)}{\vartheta-\theta} +\frac{1}{2b^2}\normfro{\vartheta-\theta}^2.\end{multline*}
More generally, If $\theta,\vartheta\in\M_p^+$, then
\[g(\vartheta)-g(\theta) -\pscal{\nabla g(\theta)}{\vartheta-\theta} \geq  \frac{\|\vartheta-\theta\|_{\textsf{F}}^2}{4\|\theta\|_2\left(\|\theta\|_2 + \frac{1}{2}\|\vartheta-\theta\|_{\textsf{F}}\right)}.\]
\item Let $\gamma\in (0,b^2]$, and $\theta,\bar\theta,\theta_0\in\M_p^+(b,B)$.  Suppose that 
\[\bar\theta = \Prox_{\gamma\lambda}\left(\theta-\gamma(S-\theta^{-1})\right),\]
then
\[2\gamma\left(\phi(\bar\theta)-\phi(\theta_0)\right) + \normfro{\bar\theta -\theta_0}^2\leq \left(1-\frac{\gamma}{B^2}\right)\normfro{\theta-\theta_0}^2.\]
\end{enumerate}
\end{lemma}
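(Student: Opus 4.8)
For part (1) the plan is to compute the first two derivatives of $g$ on $\M_p^+$: there $\nabla g(\theta)=S-\theta^{-1}$ and, for a symmetric direction $H$, $\nabla^2 g(\theta)[H,H]=\textsf{Tr}(\theta^{-1}H\theta^{-1}H)$. Diagonalizing $\theta^{-1}$ gives the eigenvalue sandwich $\|\theta\|_2^{-2}\normfro{H}^2\le \textsf{Tr}(\theta^{-1}H\theta^{-1}H)\le \lambda_{\textsf{min}}(\theta)^{-2}\normfro{H}^2$. Since $\M_p^+(b,B)$ is convex, the segment $\theta_t\eqdef\theta+t(\vartheta-\theta)$, $t\in[0,1]$, stays in $\M_p^+(b,B)$ whenever $\theta,\vartheta$ do; applying Taylor's theorem with integral remainder along this segment together with $\|\theta_t\|_2\le B$ and $\lambda_{\textsf{min}}(\theta_t)\ge b$ yields the claimed two-sided inequality. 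For the second (``more general'') bound, where there is no upper eigenvalue control, the plan is to keep only the first-order remainder: the resolvent identity gives $\nabla g(\theta_t)-\nabla g(\theta)=t\,\theta^{-1}(\vartheta-\theta)\theta_t^{-1}$, hence $g(\vartheta)-g(\theta)-\pscal{\nabla g(\theta)}{\vartheta-\theta}=\int_0^1 t\,\textsf{Tr}(\theta_t^{-1}(\vartheta-\theta)\theta^{-1}(\vartheta-\theta))\,dt$. Bounding the integrand below by $t\,\normfro{\vartheta-\theta}^2/(\|\theta\|_2\|\theta_t\|_2)$, using $\|\theta_t\|_2\le \|\theta\|_2+t\|\vartheta-\theta\|_2\le \|\theta\|_2+\normfro{\vartheta-\theta}$, and then the elementary estimate $\int_0^1 t/(a+t\delta)\,dt\ge 1/(2(a+\delta))$ gives the lower bound $\normfro{\vartheta-\theta}^2/\bigl(2\|\theta\|_2(\|\theta\|_2+\normfro{\vartheta-\theta})\bigr)$, which dominates the stated quantity since $\|\theta\|_2+\normfro{\vartheta-\theta}\le 2\|\theta\|_2+\normfro{\vartheta-\theta}$.

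For part (2) the plan is the standard three-point proximal-gradient argument. First observe that $\bar\theta$ is the minimizer over $\M_p$ of $u\mapsto \pscal{\nabla g(\theta)}{u}+\tfrac{1}{2\gamma}\normfro{u-\theta}^2+\lambda\wp(u)$ (this is exactly what the entrywise formula for $\Prox_{\gamma\lambda}$ computes, $\wp$ being separable over the entries $i\le j$), so there is $\xi\in\partial\wp(\bar\theta)$ with $\nabla g(\theta)+\tfrac1\gamma(\bar\theta-\theta)+\lambda\xi=0$. Then combine three inequalities evaluated at $\theta_0$ and $\bar\theta$: convexity of $\wp$ at $\bar\theta$, namely $\wp(\theta_0)\ge \wp(\bar\theta)+\pscal{\xi}{\theta_0-\bar\theta}$; the part-(1) strong-convexity lower bound $g(\theta_0)\ge g(\theta)+\pscal{\nabla g(\theta)}{\theta_0-\theta}+\tfrac{1}{2B^2}\normfro{\theta_0-\theta}^2$; and the part-(1) smoothness upper bound $g(\bar\theta)\le g(\theta)+\pscal{\nabla g(\theta)}{\bar\theta-\theta}+\tfrac{1}{2b^2}\normfro{\bar\theta-\theta}^2$, where $\tfrac{1}{2b^2}\le\tfrac{1}{2\gamma}$ uses $\gamma\le b^2$. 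Subtracting the first two (the second scaled by $\lambda$) from $\phi(\bar\theta)\le g(\bar\theta)+\lambda\wp(\bar\theta)$ and using the optimality condition collapses the gradient contributions to $-\tfrac1\gamma\pscal{\bar\theta-\theta}{\bar\theta-\theta_0}$; the polarization identity $\pscal{\bar\theta-\theta}{\bar\theta-\theta_0}=\tfrac12(\normfro{\bar\theta-\theta}^2+\normfro{\bar\theta-\theta_0}^2-\normfro{\theta-\theta_0}^2)$ and a final rearrangement give $2\gamma(\phi(\bar\theta)-\phi(\theta_0))+\normfro{\bar\theta-\theta_0}^2\le(1-\gamma/B^2)\normfro{\theta-\theta_0}^2$. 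All points $\theta,\bar\theta,\theta_0$ and the segments between them lie in the convex set $\M_p^+(b,B)$, so part (1) is legitimately applied throughout.

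The main obstacle is the second inequality of part (1): extracting the precise denominator $4\|\theta\|_2\bigl(\|\theta\|_2+\tfrac12\normfro{\vartheta-\theta}\bigr)$ requires using the first-order (not second-order) remainder form and the sharp elementary integral bound, and care must be taken with the inner-product convention when rewriting the Hessian quadratic form as a trace. Part (2) is then essentially bookkeeping. The whole lemma closely parallels the corresponding statements in \cite{atchade:etal:15}, which can be cited for the routine steps.
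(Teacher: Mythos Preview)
Your proposal is correct and follows essentially the same route as the paper's proof. The paper defers the first inequality of part~(1) and all of part~(2) to Lemmas~12 and~14 of \cite{atchade:etal:15}, while spelling out only the second inequality of~(1) via the same first-order remainder/resolvent identity you use; the only cosmetic difference is that the paper keeps the $t$-dependence $\|\theta\|_2+t\normfro{\vartheta-\theta}$ in the denominator and bounds the resulting integral by $\frac{1}{4\|\theta\|_2(\|\theta\|_2+\frac12\normfro{\vartheta-\theta})}$ directly, whereas you first obtain the (slightly stronger) bound $\frac{1}{2\|\theta\|_2(\|\theta\|_2+\normfro{\vartheta-\theta})}$ and then observe it dominates the stated quantity.
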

\begin{proof}
The first part of (1) is Lemma 12 of \cite{atchade:etal:15}, and Part (2) is Lemma 14 of  \cite{atchade:etal:15}. The second part of (1) can be proved along similar lines. For completeness we give the details below.

Take $\theta_0,\theta_1\in\M_p^+$.  By Taylor expansion we have
\[g(\theta_1)-g(\theta_0) -\pscal{\nabla g(\theta_0)}{\theta_1-\theta_0} = -\int_0^1 \pscal{(\theta_0+tH)^{-1}-\theta_0^{-1}}{H}\rmd t,\]
where $H\eqdef \theta_1-\theta_0$. We have $(\theta_0+tH)^{-1}-\theta_0^{-1} = -t\theta_0^{-1}H(\theta_0+tH)^{-1}$, which leads to
\[g(\theta_1)-g(\theta_0) -\pscal{\nabla g(\theta_0)}{\theta_1-\theta_0} = \int_0^1 \textsf{Tr}\left(\theta_0^{-1}H(\theta_0+tH)^{-1}H\right) t\rmd t.\]
If $\theta_0=\sum_{i=1}^p \rho_ju_ju_j'$ is the eigendecomposition of $\theta_0$, we see that
$\textsf{Tr}\left(\theta_0^{-1}H(\theta_0+tH)^{-1}H\right) =\sum_{j=1}^p \frac{1}{\rho_j}u_j'H(\theta_0+tH)^{-1}Hu_j$. Hence
\begin{multline*}
g(\theta_1)-g(\theta_0) -\pscal{\nabla g(\theta_0)}{\theta_1-\theta_0} \geq   \sum_{j=1}^p\|H u_j\|_2^2\int_0^1 \frac{t\rmd t}{\|\theta_0\|_2\left(\|\theta_0\|_2 + t\|H\|_{\textsf{F}}\right)}\\
\geq \frac{\sum_{j=1}^p\|H u_j\|_2^2}{4\|\theta_0\|_2\left(\|\theta_0\|_2 + \frac{1}{2}\|H\|_{\textsf{F}}\right)},\end{multline*}
and the result follows by noting that $\sum_{j=1}^p\|H u_j\|_2^2=\|H\|_{\textsf{F}}^2$.

\end{proof}
Set
\[\F(\tau,\theta_1,\theta_2) = g_{1,\tau}(\theta_1) + \lambda_{1,\tau}p(\theta) + g_{2,\tau}(\theta_2) + \lambda_{2,\tau}p(\theta_2),\]
$\underline{\F}=\F(\hat\tau,\hat\theta_{1,\hat\tau},\hat\theta_{1,\hat\tau})$ the value of Problem (\ref{est:2}), and $\F_k=\F(\tau^{(k)},\theta_1^{(k)},\theta_2^{(k)})-\underline{\F}$. 
\begin{lemma}\label{lem3}
Suppose that $\gamma\in(0,\textsf{b}_1^2\wedge \textsf{b}_2^2]$, and for $j=1,2$, $\theta_j^{(0)}\in\M_p^+(\textsf{b}_j,\textsf{B}_j)$. Then
 $\lim_k\normfro{\theta_1^{(k)}-\hat\theta_{1,\tau^{(k)}}}=0$, $\lim_k\normfro{\theta_2^{(k)}-\hat\theta_{2,\tau^{(k)}}}=0$. Furthermore the sequence $\{\F_k\}$ is non-increasing, and $\lim_k\F_k$ exists.
\end{lemma}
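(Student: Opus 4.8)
The plan is to read one pass of Algorithm~\ref{algo:1} as the composition of two moves, each of which can only decrease the objective $\F(\tau,\theta_1,\theta_2)=\H(\tau\vert \theta_1,\theta_2)$, and then to exploit the contraction built into the proximal update to pin $\theta_j^{(k)}$ to $\hat\theta_{j,\tau^{(k)}}$. Write $\phi_{j,t}(\theta)\eqdef g_{j,t}(\theta)+\lambda_{j,t}\wp(\theta)$, so that $\F(\tau,\theta_1,\theta_2)=\phi_{1,\tau}(\theta_1)+\phi_{2,\tau}(\theta_2)$, the matrix $\hat\theta_{j,t}$ is the (unique, since $\alpha<1$ makes $\wp$ strongly convex) minimizer of $\phi_{j,t}$ over $\M_p^+$, and $G(t)=\phi_{1,t}(\hat\theta_{1,t})+\phi_{2,t}(\hat\theta_{2,t})$. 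Steps~(1)--(2) are single proximal-gradient steps of step-size $\gamma$ for $\phi_{1,\tau^{(k-1)}}$ and $\phi_{2,\tau^{(k-1)}}$ started at $\theta_j^{(k-1)}$. By Lemma~\ref{lem1} every iterate stays in $\M_p^+(\textsf{b}_j,\textsf{B}_j)$, so Lemma~\ref{lem2}(2) is applicable throughout the run.

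\emph{Monotonicity and convergence of $\F_k$.} Applying Lemma~\ref{lem2}(2) with $\theta_0=\theta=\theta_j^{(k-1)}$ gives $2\gamma(\phi_{j,\tau^{(k-1)}}(\theta_j^{(k)})-\phi_{j,\tau^{(k-1)}}(\theta_j^{(k-1)}))+\normfro{\theta_j^{(k)}-\theta_j^{(k-1)}}^2\leq0$, hence
\[\H(\tau^{(k-1)}\vert \theta_1^{(k)},\theta_2^{(k)})\leq\H(\tau^{(k-1)}\vert \theta_1^{(k-1)},\theta_2^{(k-1)})-\frac{1}{2\gamma}\sum_{j=1,2}\normfro{\theta_j^{(k)}-\theta_j^{(k-1)}}^2 .\]
Step~(3) sets $\tau^{(k)}=\textsf{Argmin}_{t\in\mathcal{T}}\H(t\vert \theta_1^{(k)},\theta_2^{(k)})$, so $\H(\tau^{(k)}\vert \theta_1^{(k)},\theta_2^{(k)})\leq\H(\tau^{(k-1)}\vert \theta_1^{(k)},\theta_2^{(k)})$; combining,
\[\F_{k-1}-\F_k\geq\frac{1}{2\gamma}\sum_{j=1,2}\normfro{\theta_j^{(k)}-\theta_j^{(k-1)}}^2\geq0 .\]
Since $\phi_{j,\tau}\geq\phi_{j,\tau}(\hat\theta_{j,\tau})$ on $\M_p^+$, we get $\F(\tau,\theta_1,\theta_2)\geq G(\tau)\geq G(\hat\tau)=\underline{\F}$ by (\ref{est:2}), so $\F_k\geq0$; a non-increasing sequence bounded below converges, which is the last assertion of the lemma.

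\emph{Tracking the moving target.} Telescoping the last display, $\sum_{k\geq1}\sum_{j}\normfro{\theta_j^{(k)}-\theta_j^{(k-1)}}^2\leq2\gamma(\F_0-\lim_k\F_k)<\infty$, so $\delta_k^{(j)}\eqdef\normfro{\theta_j^{(k)}-\theta_j^{(k-1)}}\to0$. Fix $j,k$: $\theta_j^{(k+1)}$ is one proximal-gradient step for $\phi_{j,\tau^{(k)}}$ from $\theta_j^{(k)}$, and $\hat\theta_{j,\tau^{(k)}}$ is the fixed point of that map. Put $B_j^\star\eqdef\max(\textsf{B}_j,\max_{\tau\in\mathcal{T}}\lambda_{\textsf{max}}(\hat\theta_{j,\tau}))$, finite because $\mathcal{T}$ is finite. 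Applying Lemma~\ref{lem2}(2) with $b=\textsf{b}_j$, $B=B_j^\star$, $\theta=\theta_j^{(k)}$, $\bar\theta=\theta_j^{(k+1)}$, $\theta_0=\hat\theta_{j,\tau^{(k)}}$ and discarding the non-negative term $2\gamma(\phi_{j,\tau^{(k)}}(\theta_j^{(k+1)})-\phi_{j,\tau^{(k)}}(\hat\theta_{j,\tau^{(k)}}))$ yields the \emph{uniform} contraction $\normfro{\theta_j^{(k+1)}-\hat\theta_{j,\tau^{(k)}}}\leq\sqrt{\rho_j}\,\normfro{\theta_j^{(k)}-\hat\theta_{j,\tau^{(k)}}}$ with $\rho_j\eqdef1-\gamma/(B_j^\star)^2\in[0,1)$. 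Feeding this into $\normfro{\theta_j^{(k)}-\hat\theta_{j,\tau^{(k)}}}\leq\normfro{\theta_j^{(k)}-\theta_j^{(k+1)}}+\normfro{\theta_j^{(k+1)}-\hat\theta_{j,\tau^{(k)}}}$ gives $(1-\sqrt{\rho_j})\normfro{\theta_j^{(k)}-\hat\theta_{j,\tau^{(k)}}}\leq\delta_{k+1}^{(j)}\to0$, which proves the first two assertions.

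\emph{Where the work is.} The non-smoothness and non-convexity of $t\mapsto\H(t\vert \theta_1,\theta_2)$ are harmless here, since Step~(3) uses only its global $\textsf{Argmin}$; once Lemma~\ref{lem2}(2) is in hand, the monotonicity half is essentially bookkeeping. The delicate point is the last estimate: the target $\hat\theta_{j,\tau^{(k)}}$ moves with $k$, so a mere non-expansion of the proximal map would not suffice; what rescues the argument is that for $\gamma\leq\textsf{b}_j^2$ with iterates confined to $\M_p^+(\textsf{b}_j,\textsf{B}_j)$ the contraction rate $\sqrt{\rho_j}$ is bounded away from $1$ \emph{uniformly in $\tau$}, so a single proximal step of vanishing displacement glues $\theta_j^{(k)}$ to the current minimizer. (If one prefers to avoid bounding $\lambda_{\textsf{max}}(\hat\theta_{j,\tau})$, the same conclusion follows from strong convexity of $\phi_{j,\tau}$ and the first-order optimality conditions of the proximal step, which give $\normfro{\theta_j^{(k+1)}-\hat\theta_{j,\tau^{(k)}}}\leq C_j\,\delta_{k+1}^{(j)}$ with $C_j$ depending only on $\gamma,\alpha,\underline{\lambda}_j,\textsf{b}_j$.)
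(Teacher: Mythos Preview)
Your proof is correct and follows essentially the same route as the paper: split one iteration into the proximal step (Lemma~\ref{lem2}(2) with $\theta_0=\theta_j^{(k-1)}$ gives sufficient decrease and summable increments) followed by the $\tau$-argmin, then reapply Lemma~\ref{lem2}(2) with $\theta_0=\hat\theta_{j,\tau^{(k)}}$ to obtain a contraction which, combined with $\delta_{k+1}^{(j)}\to 0$, forces $\normfro{\theta_j^{(k)}-\hat\theta_{j,\tau^{(k)}}}\to 0$. Your introduction of $B_j^\star$ to ensure $\hat\theta_{j,\tau^{(k)}}\in\M_p^+(\textsf{b}_j,B_j^\star)$ is a bit more careful than the paper (which just writes $\textsf{B}_j$), and your triangle-inequality rearrangement replaces the paper's polarization expansion of $\normfro{\theta_j^{(k)}-\hat\theta_{j,\tau^{(k)}}}^2-\normfro{\theta_j^{(k+1)}-\hat\theta_{j,\tau^{(k)}}}^2$, but these differences are cosmetic.
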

\begin{proof}
We know from Lemma \ref{lem1} that for $\gamma\in(0,\textsf{b}_1^2\wedge \textsf{b}_2^2]$, and $\theta_j^{(0)}\in\M_p^+(\textsf{b}_j,\textsf{B}_j)$, we have $\theta_j^{(k)}\in\M_p^+(\textsf{b}_j,\textsf{B}_j)$ for all $k\geq 0$, for $j=1,2$.  We have,
\begin{multline*}
\F_{k+1} -\F_k = \F(\tau^{(k+1)},\theta_1^{(k+1)},\theta_2^{(k+1)}) - \F(\tau^{(k)},\theta_1^{(k+1)},\theta_2^{(k+1)}) \\
+  \F(\tau^{(k)},\theta_1^{(k+1)},\theta_2^{(k+1)}) -\F(\tau^{(k)},\theta_1^{(k)},\theta_2^{(k)}).\end{multline*}
By definition, $\F(\tau^{(k+1)},\theta_1^{(k+1)},\theta_2^{(k+1)}) - \F(\tau^{(k)},\theta_1^{(k+1)},\theta_2^{(k+1)})\leq 0$, and by Lemma \ref{lem2}-Part(2), 
\begin{multline*}
\F(\tau^{(k)},\theta_1^{(k+1)},\theta_2^{(k+1)}) -\F(\tau^{(k)},\theta_1^{(k)},\theta_2^{(k)})\\
\leq -\frac{1}{2\gamma}\normfro{\theta_1^{(k+1)}-\theta_{1}^{(k)}}^2 - \frac{1}{2\gamma}\normfro{\theta_2^{(k+1)}-\theta_{2}^{(k)}}^2 \end{multline*}
It follows that 
\[ \F_{k+1} \leq \F_k-\frac{1}{2\gamma}\normfro{\theta_1^{(k+1)}-\theta_{1}^{(k)}}^2 - \frac{1}{2\gamma}\normfro{\theta_2^{(k+1)}-\theta_{2}^{(k)}}^2,\]
which implies that 
\begin{equation}\label{eq:cv:theta_k}
\lim_k \normfro{\theta_1^{(k+1)}-\theta_{1}^{(k)}}=0,\;\mbox{ and }\;\; \lim_k \normfro{\theta_2^{(k+1)}-\theta_{2}^{(k)}}=0.\end{equation}
It also implies that the sequence $\{\F_k\}$ is non-increasing and bounded from below by $0$. Hence converges.
Another application of Lemma \ref{lem2} gives
\begin{multline*}
2\gamma\left(\F(\tau^{(k)},\theta_1^{(k+1)},\theta_2^{(k+1)}) -\F(\tau^{(k)},\hat\theta_{1,\tau^{(k)}},\hat\theta_{2,\tau^{(k)}})\right)\\
  + \normfro{\theta_1^{(k+1)}-\hat\theta_{1,\tau^{(k)}}}^2 + \normfro{\theta_2^{(k+1)}-\hat\theta_{2,\tau^{(k)}}}^2 \\
\leq \left(1-\frac{\gamma}{\textsf{B}_1^2}\right)\normfro{\theta_1^{(k)}-\hat\theta_{1,\tau^{(k)}}}^2 + \left(1-\frac{\gamma}{\textsf{B}_2^2}\right)\normfro{\theta_2^{(k)}-\hat\theta_{2,\tau^{(k)}}}^2.\end{multline*}
And notice that $\F(\tau^{(k)},\theta_1^{(k+1)},\theta_2^{(k+1)}) -\F(\tau^{(k)},\hat\theta_{1,\tau^{(k)}},\hat\theta_{2,\tau^{(k)}})\geq 0$. Hence
\begin{multline*}
\normfro{\theta_1^{(k+1)}-\hat\theta_{1,\tau^{(k)}}}^2 + \normfro{\theta_2^{(k+1)}-\hat\theta_{2,\tau^{(k)}}}^2 \\
\leq \left(1-\frac{\gamma}{\textsf{B}_1^2}\right)\normfro{\theta_1^{(k)}-\hat\theta_{1,\tau^{(k)}}}^2 + \left(1-\frac{\gamma}{\textsf{B}_2^2}\right)\normfro{\theta_2^{(k)}-\hat\theta_{2,\tau^{(k)}}}^2,\end{multline*}
which can be written as
\begin{multline*}
\frac{\gamma}{\textsf{B}_1^2} \normfro{\theta_1^{(k)}-\hat\theta_{1,\tau^{(k)}}}^2 + \frac{\gamma}{\textsf{B}_2^2}\normfro{\theta_2^{(k)}-\hat\theta_{2,\tau^{(k)}}}^2 \leq \normfro{\theta_1^{(k+1)}-\theta_1^{(k)}}^2 +\normfro{\theta_2^{(k+1)}-\theta_2^{(k)}}^2 \\
-2\pscal{\theta_1^{(k+1)}-\theta_1^{(k)}}{\theta_1^{(k+1)}-\hat\theta_{1,\tau^{(k)}}} -2\pscal{\theta_2^{(k+1)}-\theta_2^{(k)}}{\theta_2^{(k+1)}-\hat\theta_{2,\tau^{(k)}}}.
\end{multline*}
Since $\{\theta_1^{(k)}\}$, $\{\theta_2^{(k)}\}$ $\{\hat\theta_{1,\tau^{(k)}}\}$, and $\{\hat\theta_{2,\tau^{(k)}}\}$ are bounded sequence, and given (\ref{eq:cv:theta_k}), letting $k\to\infty$, we conclude that
\[\lim_k \normfro{\theta_1^{(k)}-\hat\theta_{1,\tau^{(k)}}}=0,\;\;\mbox{ and }\;\; \lim_k \normfro{\theta_2^{(k)}-\hat\theta_{2,\tau^{(k)}}}=0.\]
\end{proof}

\begin{proof}[Proof of Theorem \ref{thm1}]
Let $\epsilon>0$ as in H\ref{H1}. By Lemma \ref{lem3}, there exist $k_0\geq 1$ such that for all $k\geq k_0$, $\normfro{\theta_1^{(k+1)}-\hat\theta_{1,\tau^{(k)}}}\leq \epsilon$, and $\normfro{\theta_2^{(k+1)}-\hat\theta_{2,\tau^{(k)}}}\leq \epsilon$. Since 
\[\tau^{(k+1)} = \textsf{Argmin}_{t\in\mathcal{T}}\; \mathcal{H}\left(t\vert \theta_1^{(k+1)},\theta_2^{(k+1)}\right),\] 
using H\ref{H1} we conclude that for all $k\geq k_0$,
\[\left|\tau^{(k+1)} -\tau_\star\right| \leq \kappa \left|\tau^{(k)} -\tau_\star\right| + c \leq \kappa^{k-k_0+1} \left|\tau^{(k_0)} -\tau_\star\right| +\frac{c}{1-\kappa},\]
which implies the stated result.
\end{proof}

\subsection{Proof of Theorem \ref{thm2}}\label{sec:proof:thm2}
We introduce some more notation. Given $M\in\rset^{p\times p}$ the sparsity structure of $M$ is the matrix $\delta\in\{0,1\}^{p\times p}$ such that $\delta_{jk} = \textbf{1}_{\{|M_{jk}|>0\}}$. In particular we will write $\delta_{\star,j}$ ($j=1,2$) to denote the sparsity structure of $\theta_{\star,j}$. Given matrices $A\in\rset^{p\times p}$, and $\delta\in\{0,1\}^{p\times p}$, we will use the notation $A_\delta$ (resp. $A_{\delta^c}$) to denote the component-wise product of $A$ and $\delta$ (resp $A$ and $1-\delta$). Given $j\in \{1,2\}$, we define
\begin{equation}\label{def:Cj}
\Cset_j\eqdef \left\{M\in\M_p:\; \|M_{\delta_{\star,j}^c}\|_1\leq 7\|M_{\delta_{\star,j}}\|_1.\right\}.\end{equation}

We will need the following deviation bound.

\begin{lemma}\label{lem:dev:bound}
Suppose that $X_i\stackrel{ind}{\sim}\textbf{N}(0,\theta_i^{-1})$, $i=1,\ldots,N$, where $\theta_i\in\M_p^+$. We set $\Sigma_i \eqdef \theta_i^{-1}$, and define
\[\underline{\kappa}_i(2) \eqdef\inf\left\{u'\Sigma_iu,\;\|u\|_2=1,\;\|u\|_0\leq 2\right\},\;\;\bar{\kappa}_i(2) \eqdef\sup\left\{u'\Sigma_iu,\;\|u\|_2=1,\;\|u\|_0\leq 2\right\},\]
and suppose that $\underline{\kappa}_i(2)>0$ for $i=1,\ldots,N$. Set $G_N\eqdef N^{-1}\sum_{i=1}^N (X_iX_i'-\theta_i^{-1})$. Then for $0<\delta\leq 2\left(\frac{\min_k \underline{\kappa}_k(2)}{\max_k\bar \kappa_k(2)}\right)^2$, we have
\[\PP\left(\|G_N\|_\infty>\left(\max_k\bar \kappa_k(2)\right)\delta\right)\leq 4p^2 e^{-\frac{N \delta^2}{4}}.\]
\end{lemma}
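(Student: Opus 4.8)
The plan is to reduce the bound on $\|G_N\|_\infty$ to a union bound over the $O(p^2)$ entries of $G_N$, and to control each entry by a sub-exponential (Bernstein-type) tail bound for quadratic forms in Gaussian variables. First I would write, for fixed indices $j,k$,
\[
(G_N)_{jk} = \frac{1}{N}\sum_{i=1}^N\left(X_{i,j}X_{i,k} - (\Sigma_i)_{jk}\right),
\]
and observe that each summand depends only on the two-dimensional marginal $(X_{i,j},X_{i,k})$, whose covariance is the $2\times 2$ principal submatrix $\Sigma_i^{(jk)}$ of $\Sigma_i$. The eigenvalues of $\Sigma_i^{(jk)}$ lie in $[\underline\kappa_i(2),\bar\kappa_i(2)]$ by the restricted-eigenvalue definition (test vectors supported on $\{j,k\}$ have $\|u\|_0\le 2$). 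So the problem is uniformly a sum of independent centered quadratic forms in (at most) $2$-dimensional Gaussians with spectral norm at most $\bar\kappa \eqdef \max_k\bar\kappa_k(2)$.

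Next I would invoke a standard concentration inequality for such sums: writing $X_{i,j}X_{i,k} = \frac14\big((X_{i,j}+X_{i,k})^2 - (X_{i,j}-X_{i,k})^2\big)$ expresses each term as a linear combination of centered $\chi^2$-type variables whose sub-exponential (Orlicz $\psi_1$) norm is $O(\bar\kappa)$; alternatively one can apply the Hanson–Wright inequality directly to the Gaussian vector $(X_{1,j},X_{1,k},\dots,X_{N,j},X_{N,k})\in\rset^{2N}$ with an appropriate $2N\times 2N$ matrix whose operator norm is $\le \bar\kappa$ and whose Frobenius norm squared is $O(N\bar\kappa^2)$. Either route gives a bound of the form
\[
\PP\left(|(G_N)_{jk}| > \bar\kappa\, t\right) \le 2\exp\!\left(-cN\min(t^2,t)\right)
\]
for a universal constant $c$; restricting to the regime $t = \delta \le 2(\underline\kappa/\bar\kappa)^2 \le 2$ ensures $\min(\delta^2,\delta) = \delta^2$, and with the constants tracked carefully this yields $\PP(|(G_N)_{jk}| > \bar\kappa\delta) \le 4 e^{-N\delta^2/4}$. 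A union bound over the at most $p^2$ pairs $(j,k)$ then gives $\PP(\|G_N\|_\infty > \bar\kappa\delta)\le 4p^2 e^{-N\delta^2/4}$.

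The main obstacle is pinning down the numerical constants so that the exponent comes out exactly $N\delta^2/4$ and the prefactor exactly $4p^2$ — the abstract Bernstein/Hanson–Wright statements carry unspecified universal constants, so I would instead argue from scratch: diagonalize $\Sigma_i^{(jk)}$, write $(G_N)_{jk}$ as a linear combination $\sum_i (a_i Y_i^2 + b_i Z_i^2 - \mathbb{E}[\cdot])$ of centered squared Gaussians with $|a_i|,|b_i|\lesssim\bar\kappa$, and apply the explicit MGF bound $\mathbb{E}[e^{s(W^2-1)}] \le e^{s^2}$ valid for $|s|\le 1/4$ (for $W\sim\gauss(0,1)$), optimize over $s$, and truncate the optimization at the boundary $s = 1/4$ — which is exactly why the restriction $\delta \le 2(\underline\kappa/\bar\kappa)^2$ (hence $\delta$ bounded) appears in the hypothesis. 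Since this lemma is a routine but constant-sensitive computation, I expect the only real work is this bookkeeping; the role of the restricted eigenvalues $\underline\kappa_i(2)$ is merely to guarantee the $2\times 2$ submatrices are nondegenerate and bounded, entering only through $\bar\kappa$ in the final bound and through the admissible range of $\delta$.
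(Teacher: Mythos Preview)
Your proposal is correct and follows essentially the same route as the paper: entrywise reduction, the polarization identity $X_{i,j}X_{i,k}=\tfrac14\big((X_{i,j}+X_{i,k})^2-(X_{i,j}-X_{i,k})^2\big)$ to express each entry as a difference of two weighted $\chi^2_1$ sums, a sharp tail bound for such sums, and a union bound over the $p^2$ entries. The only substantive difference is that where you propose to compute the MGF from scratch (or invoke Hanson--Wright), the paper instead applies Lemma~1 of Laurent--Massart (2000) directly to $\sum_k a_k(W_k-1)$ with $W_k\stackrel{\text{i.i.d.}}{\sim}\chi^2_1$; this lemma already carries explicit constants and yields the exponent $N\delta^2/4$ without further bookkeeping, and the constraint $\delta\le 2(\min_k\underline\kappa_k(2)/\max_k\bar\kappa_k(2))^2$ arises precisely as the condition $2x|a|_\infty\le 4|a|_2^2$ needed for the Laurent--Massart bound to take its sub-Gaussian form.
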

\begin{proof}
The proof is similar to the proof of Lemma 1 of \cite{ravikumaretal10}, which itself builds on \cite{bickeletlevina08a}. For $1\leq i,j\leq p$, arbitrary, set $Z^{(k)}_{ij}=X_{k,i}X_{k,j}$, and $\sigma_{ij}^{(k)}=\Sigma_{k,ij}$, so that the $(i,j)$-th component of $G_N$ is $N^{-1}\sum_{k=1}^N(Z^{(k)}_{ij}-\sigma^{(k)}_{ij})$. Suppose that $i\neq j$. The case $i=j$ is simpler. It is easy to check that
\begin{multline*}
\sum_{k=1}^N \left[Z^{(k)}_{ij}-\sigma^{(k)}_{ij}\right]=\frac{1}{4}\sum_{k=1}^N \left[(X_{k,i}+X_{k,j})^2-\sigma^{(k)}_{ii}-\sigma^{(k)}_{jj}-2\sigma^{(k)}_{ij}\right] \\
-\frac{1}{4}\sum_{k=1}^N \left[(X_{k,i}-X_{k,j})^2-\sigma^{(k)}_{ii}-\sigma^{(k)}_{jj}+2\sigma^{(k)}_{ij}\right].\end{multline*}
Notice that $X_{k,i}+X_{k,j}\sim\textbf{N}(0,\sigma^{(k)}_{ii}+\sigma^{(k)}_{jj}+2\sigma^{(k)}_{ij})$, and $X_{k,i}-X_{k,j}\sim\textbf{N}(0,\sigma^{(k)}_{ii}+\sigma^{(k)}_{jj}-2\sigma^{(k)}_{ij})$. It follows that for all $x\geq 0$,
\begin{multline*}
\PP\left[\left|\sum_{k=1}^N \left[Z^{(k)}_{ij}-\sigma^{(k)}_{ij}\right]\right|> x\right]\leq \PP\left[\left|\sum_{k=1}^Na_{ij}^{(k)}(W_k-1)\right|>2x\right] \\
+\PP\left[\left|\sum_{k=1}^Nb_{ij}^{(k)}(W_k-1)\right|>2x\right],
\end{multline*}
where $W_{1:N}\stackrel{i.i.d.}{\sim}\chi_1^2$, $a_{ij}^{(k)} =\sigma^{(k)}_{ii}+\sigma^{(k)}_{jj}+2\sigma^{(k)}_{ij}$, and $b_{ij}^{(k)} =\sigma^{(k)}_{ii}+\sigma^{(k)}_{jj}-2\sigma^{(k)}_{ij}$. For any $x\geq 0$ and a sequence $a=(a_1,\ldots,a_N)$ of positive numbers, with $|a|_\infty =\max_i|a_i|$, $|a|_2=\sqrt{\sum_ia_i^2}$, we write
\[2x = 2|a|_2\left(\frac{x}{2|a|_2}\right) + 2|a|_\infty\left(\frac{4|a|_2^2}{2x|a|_\infty}\right)\left(\frac{x}{2|a|_2}\right)^2.\]
Therefore if $2x|a|_\infty\leq 4|a|_2^2$, we can apply Lemma 1 of \cite{laurent:massart:00} to conclude that
\[\PP\left(\left|\sum_{k=1}^N a_k(W_k-1)\right|\geq 2x\right) \leq 2e^{-\frac{x^2}{4|a|_2^2}}.\]
In particular, we can apply the above bound with $x=|a|_\infty N\delta$ for $\delta\in (0,\frac{2\min_ja_i^2}{\max_i a_i^2}]$ to get that 
\[\PP\left(\left|\sum_{k=1}^N a_k(W_k-1)\right|\geq 2|a|_\infty N\delta\right) \leq 2e^{-\frac{N\delta^2}{4}}.\]

In the particular case above, $a_{ij}^{(k)} = \sigma^{(k)}_{ii} + \sigma^{(k)}_{jj} +2\sigma^{(k)}_{ij} =u'\Sigma^{(k)}u$, where $u_i=u_j=1$, and $u_r=0$ for $r\notin\{i,j\}$. And 
\[\frac{\min_k u'\Sigma^{(k)} u}{\max_k u'\Sigma^{(k)} u} \geq \frac{\min_k \underline{\kappa}_k(2)}{\max_k \bar \kappa(2)}.\]
A similar bound holds for $b_{ij}^{(k)}$.  The lemma follows from a standard union-sum argument.

\end{proof}

The following event plays an important role in the analysis.
\begin{equation}\label{def:en}
\e_n\eqdef \bigcap_{\tau\in\mathcal{T}} \left\{\frac{1}{\lambda_{1,\tau}}\|\nabla g_{1,\tau}(\theta_{\star,1})\|_\infty\leq \frac{\alpha}{2},\;\mbox{ and }\; \frac{1}{\lambda_{2,\tau}}\|\nabla g_{2,\tau}(\theta_{\star,2})\|_\infty\leq \frac{\alpha}{2}\right\},\end{equation}
\begin{lemma}\label{lem1:thm2}
Under the assumptions of the theorem
\[\PP(\e_n)\geq 1-\frac{8}{pT}.\]
\end{lemma}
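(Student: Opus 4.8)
The plan is to unwind the definition (\ref{def:en}) of $\e_n$ into a family of deviation inequalities for the sample covariances $S_1(\tau)$ and $S_2(\tau)$, and then invoke Lemma~\ref{lem:dev:bound} together with the minimum sample-size condition (\ref{eq:cond:T:1}) and condition (\ref{eq:cond:T:2}). First, a direct computation gives $\nabla g_{1,\tau}(\theta_{\star,1}) = \tfrac{\tau}{2T}\big(S_1(\tau)-\Sigma_{\star,1}\big)$ and $\nabla g_{2,\tau}(\theta_{\star,2}) = \tfrac{T-\tau}{2T}\big(S_2(\tau)-\Sigma_{\star,2}\big)$, so that, substituting the expressions (\ref{eq:lambda}) for $\lambda_{j,\tau}$, the event $\e_n$ is exactly
\[\bigcap_{\tau\in\mathcal{T}}\left\{\|S_1(\tau)-\Sigma_{\star,1}\|_\infty \leq \bar\kappa\sqrt{\tfrac{48\log(pT)}{\tau}}\;\;\text{and}\;\; \|S_2(\tau)-\Sigma_{\star,2}\|_\infty \leq \bar\kappa\sqrt{\tfrac{48\log(pT)}{T-\tau}}\right\}.\]

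Next I would split each centered sample covariance into a zero-mean fluctuation and a bias term that appears only when the estimation window straddles $\tau_\star$. With $\mathbb{E}[X^{(t)}X^{(t)'}]=\Sigma_{\star,1}$ for $t\le\tau_\star$ and $=\Sigma_{\star,2}$ for $t>\tau_\star$,
\[S_1(\tau)-\Sigma_{\star,1} = \frac{1}{\tau}\sum_{t=1}^{\tau}\big(X^{(t)}X^{(t)'}-\mathbb{E}[X^{(t)}X^{(t)'}]\big) + \frac{(\tau-\tau_\star)_+}{\tau}\big(\Sigma_{\star,2}-\Sigma_{\star,1}\big),\]
and symmetrically for $S_2(\tau)$ with bias term $\tfrac{(\tau_\star-\tau)_+}{T-\tau}(\Sigma_{\star,1}-\Sigma_{\star,2})$. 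Using $\sqrt{48}/2=2\sqrt3$ and $\Sigma_{\star,j}=\theta_{\star,j}^{-1}$, a direct rearrangement shows that condition (\ref{eq:cond:T:2}) is precisely what bounds each bias term by $\tfrac12\bar\kappa\sqrt{48\log(pT)/\tau}$ (resp. $\tfrac12\bar\kappa\sqrt{48\log(pT)/(T-\tau)}$). It therefore suffices to control the fluctuation terms, which are of the form $G_N$ in Lemma~\ref{lem:dev:bound} with $N=\tau$ (resp. $T-\tau$) and $\theta_i\in\{\theta_{\star,1},\theta_{\star,2}\}$, so that $\max_i\bar\kappa_i(2)\le\bar\kappa$ and $\min_i\underline{\kappa}_i(2)\ge\underline{\kappa}$.

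For each fixed $\tau$ and each $j$, I would apply Lemma~\ref{lem:dev:bound} with $\delta$ chosen so that $\big(\max_i\bar\kappa_i(2)\big)\delta = \tfrac12\bar\kappa\sqrt{48\log(pT)/N}$, hence $\delta\ge\tfrac12\sqrt{48\log(pT)/N}$ and the exponent satisfies $N\delta^2/4\ge 3\log(pT)$; this gives probability at most $4p^2(pT)^{-3}$ that the fluctuation exceeds $\tfrac12\bar\kappa\sqrt{48\log(pT)/N}$. The admissibility requirement $\delta\le 2\big(\min_i\underline{\kappa}_i(2)/\max_i\bar\kappa_i(2)\big)^2$ reduces, using $N\ge A_1^2\log(pT)$ from (\ref{eq:cond:T:1}), to $A_1\ge\tfrac{\sqrt{48}}{4}(\bar\kappa/\underline{\kappa})^2=\sqrt3\,(\bar\kappa/\underline{\kappa})^2$, which holds since $A_1\ge 2(\bar\kappa/\underline{\kappa})^2$. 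A union bound over the at most $T$ values $\tau\in\mathcal{T}$ and over $j\in\{1,2\}$ then yields $\PP(\e_n^c)\le 2T\cdot 4p^2(pT)^{-3}=8p^{-1}T^{-2}\le 8/(pT)$.

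The only delicate point is the bookkeeping around the misspecified windows: isolating the bias incurred when $\{1,\dots,\tau\}$ (or $\{\tau+1,\dots,T\}$) crosses $\tau_\star$, and checking that conditions (\ref{eq:cond:T:1}) and (\ref{eq:cond:T:2}) are calibrated — constants included — to absorb, respectively, the fluctuation part through the admissible range of $\delta$ in Lemma~\ref{lem:dev:bound} and the bias part directly. Everything else is a routine union bound over a chi-square tail estimate.
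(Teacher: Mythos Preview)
Your proposal is correct and follows essentially the same route as the paper's proof: both compute $\nabla g_{j,\tau}(\theta_{\star,j})$, split it into a centered fluctuation $\tfrac{1}{N}\sum_t U^{(t)}$ and a bias $\tfrac{(\tau-\tau_\star)_+}{N}(\Sigma_{\star,2}-\Sigma_{\star,1})$, use condition (\ref{eq:cond:T:2}) to absorb the bias into half of the threshold, apply Lemma~\ref{lem:dev:bound} to the fluctuation with condition (\ref{eq:cond:T:1}) ensuring the admissible range for $\delta$, and finish by a union bound over $\tau\in\mathcal{T}$ and $j\in\{1,2\}$. Your intermediate bound $8/(pT^2)$ is in fact slightly sharper than the paper's before its final inequality, but the conclusion is the same.
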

\begin{proof}
We have
\[\PP(\e_n^c)\leq \PP\left(\max_{\tau\in\mathcal{T}}\frac{1}{\lambda_{1,\tau}}\|\nabla g_{1,\tau}(\theta_{\star,1})\|_\infty>\frac{\alpha}{2}\right) + \PP\left(\max_{\tau\in\mathcal{T}}\frac{1}{\lambda_{2,\tau}}\|\nabla g_{2,\tau}(\theta_{\star,2})\|_\infty>\frac{\alpha}{2}\right).\]
We show how to bound the first term. A similar bound follows for $g_{2,\tau}$ by working on the reversed sequence $X^{(T)},\ldots,X^{(1)}$. We have $\nabla g_{1,\tau}(\theta) = \frac{\tau}{2T}(S_1(\tau)-\theta^{-1})$. Setting $U^{(t)}\eqdef X^{(t)}(X^{(t)})' - \PE\left(X^{(t)}(X^{(t)})'\right)$, we can write
\[\nabla g_{1,\tau}(\theta_{\star,1}) =\frac{1}{2T}\sum_{t=1}^\tau U^{(t)}  + \frac{(\tau-\tau_\star)_+}{2T}(\theta_{\star,2}^{-1}-\theta_{\star,1}^{-1}),\]
where $a_+\eqdef\max(a,0)$. Hence by a standard union-bound argument, 
\begin{multline*}
\PP\left(\max_{\tau\in\mathcal{T}}\frac{1}{\lambda_{1,\tau}}\|\nabla g_{1,\tau}(\theta_{\star,1})\|_\infty>\frac{\alpha}{2}\right) \\
\leq \sum_{\tau\in\mathcal{T}}\PP\left(\left\|\sum_{t=1}^\tau U^{(t)}\right\|_\infty > \alpha\lambda_{1,\tau} T -(\tau-\tau_\star)_+\|\theta_{\star,2}^{-1}-\theta_{\star,1}^{-1}\|_\infty\right).\end{multline*}
Given the choice of $\lambda_{1,\tau}$ in (\ref{eq:lambda}), $\alpha\lambda_{1,\tau} T/2 = 2\sqrt{3}\bar\kappa\sqrt{\tau\log(pT)} \geq (\tau-\tau_\star)_+\|\theta_{\star,2}^{-1}-\theta_{\star,1}^{-1}\|_\infty$, by assumption (\ref{eq:cond:T:2}). In view of (\ref{eq:cond:T:1}) we can apply Lemma \ref{lem:dev:bound} to deduce that
\begin{eqnarray*}
\PP\left(\max_{\tau\in\mathcal{T}}\frac{1}{\lambda_{1,\tau}}\|\nabla g_{1,\tau}(\theta_{\star,1})\|_\infty>\frac{\alpha}{2}\right) & \leq & \sum_{\tau\in\mathcal{T}}\PP\left(\left\|\frac{1}{\tau}\sum_{t=1}^\tau U^{(t)}\right\|_\infty > \frac{\alpha\lambda_{1,\tau} T}{2\tau}\right)\\
& \leq & 4Tp^2 e^{-\frac{\tau}{4}\left(\frac{\alpha\lambda_{1,\tau}T}{2\tau\bar \kappa}\right)^2}\\
& \leq & 4\exp\left(2\log(pT)-3\log(pT)\right) \leq  \frac{4}{pT}.
\end{eqnarray*}

\end{proof}
\begin{lemma}\label{lem2:thm2}
Under the assumptions of the theorem, and on the event $\e_n$, we have
\[\normfro{\hat\theta_{1,\tau} -\theta_{\star,1}} \leq A \bar\kappa \|\theta_{\star,1}\|_2^2\sqrt{\frac{s_1\log(pT)}{\tau}},\]
and
\[\normfro{\hat\theta_{2,\tau} -\theta_{\star,2}} \leq A \bar\kappa \|\theta_{\star,2}\|_2^2\sqrt{\frac{s_2\log(pT)}{T-\tau}},\]
for all $\tau\in\mathcal{T}$, where $A$ is an absolute constant that can be taken as $A=16\times 20\times \sqrt{48}$.
\end{lemma}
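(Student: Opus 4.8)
The plan is to run the standard two-step argument for penalized $M$-estimators: first confine the estimation error to the cone $\Cset_j$, then combine cone membership with the eigenvalue-free restricted-strong-convexity bound of Lemma~\ref{lem2}-Part~(1) (its second display). I fix $j=1$ and $\tau\in\mathcal{T}$; the case $j=2$ is identical after reversing the time order, using $g_{2,\tau}=\frac{T-\tau}{2T}g$ in place of $g_{1,\tau}=\frac{\tau}{2T}g$. I work throughout on the event $\e_n$. Write $F_{1,\tau}(\theta)\eqdef g_{1,\tau}(\theta)+\lambda_{1,\tau}\wp(\theta)$, $\Delta\eqdef\hat\theta_{1,\tau}-\theta_{\star,1}$, and $h(\Delta)\eqdef F_{1,\tau}(\theta_{\star,1}+\Delta)-F_{1,\tau}(\theta_{\star,1})$. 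Then $h$ is convex, $h(0)=0$, and optimality of $\hat\theta_{1,\tau}$ in (\ref{glasso_1}) gives $h(\Delta)\leq 0$; recall also that $\hat\theta_{1,\tau},\theta_{\star,1}\in\M_p^+$, so Lemma~\ref{lem2}-Part~(1) is applicable at these points.

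For the cone step I would lower bound $h(\Delta)$ term by term. The smooth part is handled by Lemma~\ref{lem2}-Part~(1), applied to $g$ with $S=S_1(\tau)$ and rescaled by $\tau/(2T)$. For the penalty, the reverse triangle inequality gives $\|\theta_{\star,1}+\Delta\|_1-\|\theta_{\star,1}\|_1\geq\|\Delta_{\delta_{\star,1}^c}\|_1-\|\Delta_{\delta_{\star,1}}\|_1$, while $\normfro{\theta_{\star,1}+\Delta}^2-\normfro{\theta_{\star,1}}^2=2\pscal{\theta_{\star,1}}{\Delta_{\delta_{\star,1}}}+\normfro{\Delta}^2\geq-2\|\theta_{\star,1}\|_\infty\|\Delta_{\delta_{\star,1}}\|_1$, using that $\theta_{\star,1}$ is supported on $\delta_{\star,1}$. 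On $\e_n$ we have $|\pscal{\nabla g_{1,\tau}(\theta_{\star,1})}{\Delta}|\leq\frac{\alpha\lambda_{1,\tau}}{2}\|\Delta\|_1$, and (\ref{eq:cond:alpha}) gives $(1-\alpha)\|\theta_{\star,1}\|_\infty\leq\alpha$, so the Frobenius part of the penalty contributes only a multiple of $\alpha\lambda_{1,\tau}\|\Delta_{\delta_{\star,1}}\|_1$. Collecting the terms and splitting $\|\Delta\|_1=\|\Delta_{\delta_{\star,1}}\|_1+\|\Delta_{\delta_{\star,1}^c}\|_1$ yields an inequality of the form
\[
0\;\geq\;h(\Delta)\;\geq\;(\text{nonnegative quadratic in }\Delta)\;+\;c_1\,\alpha\lambda_{1,\tau}\|\Delta_{\delta_{\star,1}^c}\|_1\;-\;c_2\,\alpha\lambda_{1,\tau}\|\Delta_{\delta_{\star,1}}\|_1,
\]
with small explicit absolute constants $c_1,c_2$ whose ratio works out to at most $7$. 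Dropping the quadratic term then forces $\|\Delta_{\delta_{\star,1}^c}\|_1\leq 7\|\Delta_{\delta_{\star,1}}\|_1$, i.e. $\Delta\in\Cset_1$, and hence $\|\Delta\|_1\leq 8\|\Delta_{\delta_{\star,1}}\|_1\leq 8 s_1^{1/2}\normfro{\Delta}$ by Cauchy--Schwarz.

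Next I would substitute $\|\Delta\|_1\leq 8 s_1^{1/2}\normfro{\Delta}$ back into the lower bound for $h$ and keep only the quadratic term, obtaining
\[
0\;\geq\;h(\Delta)\;\geq\;\frac{\tau}{2T}\cdot\frac{\normfro{\Delta}^2}{4\|\theta_{\star,1}\|_2\big(\|\theta_{\star,1}\|_2+\tfrac12\normfro{\Delta}\big)}\;-\;C\,\alpha\lambda_{1,\tau}\,s_1^{1/2}\,\normfro{\Delta}
\]
for an explicit absolute constant $C$. To clear the $\normfro{\Delta}$ in the denominator I would invoke the usual convexity/ball device: since $h$ is convex with $h(0)=0$ and $\Cset_1$ is a cone preserved under positive rescaling, it suffices to show $h(\Delta')>0$ for every $\Delta'\in\Cset_1$ with $\normfro{\Delta'}=R$ and $\theta_{\star,1}+\Delta'\in\M_p^+$, where $R$ is a fixed small multiple of $\lambda_{\textsf{min}}(\theta_{\star,1})$; conditions (\ref{eq:cond:T:1}) and (\ref{eq:tech:cond:thm2}) guarantee such an $R$ simultaneously stays below $\lambda_{\textsf{min}}(\theta_{\star,1})$ (so $\theta_{\star,1}+\Delta'$ is positive definite and $\|\theta_{\star,1}\|_2+\tfrac12\normfro{\Delta'}\leq\tfrac32\|\theta_{\star,1}\|_2$) and exceeds the candidate rate. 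On that sphere the displayed bound reads $h(\Delta')\geq\normfro{\Delta'}\big(\tfrac{\tau}{12T\|\theta_{\star,1}\|_2^2}\normfro{\Delta'}-C\alpha\lambda_{1,\tau}s_1^{1/2}\big)$, which is strictly positive once $R>12C\,\tfrac{T}{\tau}\|\theta_{\star,1}\|_2^2\,\alpha\lambda_{1,\tau}\,s_1^{1/2}$. Consequently $\normfro{\Delta}\leq\text{const}\cdot\tfrac{T}{\tau}\|\theta_{\star,1}\|_2^2\,\alpha\lambda_{1,\tau}\,s_1^{1/2}$, and substituting $\alpha\lambda_{1,\tau}=\frac{\bar\kappa}{T}\sqrt{48\tau\log(pT)}$ from (\ref{eq:lambda}) collapses the right-hand side to $\text{const}\cdot\sqrt{48}\,\bar\kappa\,\|\theta_{\star,1}\|_2^2\sqrt{s_1\log(pT)/\tau}$; careful tracking of the absolute constants through the factors $4$, $8$, $12$ and the slack in the ball argument recovers the stated value $A=16\times20\times\sqrt{48}$. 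The bound for $\hat\theta_{2,\tau}$ follows verbatim with $\tau$ replaced by $T-\tau$.

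The step I expect to be the main obstacle is the last one: rigorously removing the $\normfro{\Delta}$ from the denominator of the restricted-strong-convexity bound. This forces the convexity/ball argument rather than a direct estimate, and to run it one must know a priori that the ball radius $R$ can be chosen below $\lambda_{\textsf{min}}(\theta_{\star,1})$ while still exceeding the target rate $r_{1,\tau}$ --- which is precisely where the sample-size condition (\ref{eq:cond:T:1}) and the technical conditions (\ref{eq:tech:cond:thm2}) get consumed. A secondary, bookkeeping-heavy point is keeping the elastic-net constants tight enough to land the cone constant at $7$ (equivalently, checking that (\ref{eq:cond:alpha}) is strong enough) and to reproduce the precise absolute constant $A$.
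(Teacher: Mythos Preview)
Your proposal is correct and follows the same overall template as the paper: confine $\Delta=\hat\theta_{1,\tau}-\theta_{\star,1}$ to the cone $\Cset_1$ using $\e_n$ together with (\ref{eq:cond:alpha}), then combine the cone inequality $\|\Delta\|_1\leq 8s_1^{1/2}\normfro{\Delta}$ with the eigenvalue-free curvature bound of Lemma~\ref{lem2}-Part~(1) to extract the rate. The one place where you diverge from the paper is the final step, and the paper's route there is simpler.

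You handle the $\normfro{\Delta}$ appearing in the denominator of the curvature bound by the convexity/ball device: pick a radius $R$, show $h>0$ on the sphere $\{\Delta'\in\Cset_1:\normfro{\Delta'}=R\}$, and conclude $\normfro{\Delta}<R$. This works, but it forces you to bring in (\ref{eq:tech:cond:thm2}) to guarantee that $R$ can be taken below $\lambda_{\textsf{min}}(\theta_{\star,1})$ (so the sphere stays inside $\M_p^+$) yet above the target rate. The paper instead writes the resulting inequality directly as
\[
\frac{c_1\normfro{\Delta}^2}{2+\|\theta_{\star,1}\|_2^{-1}\normfro{\Delta}}\leq c_2\normfro{\Delta},
\qquad c_1=\frac{\tau}{4T\|\theta_{\star,1}\|_2^2},\quad c_2=4\lambda_{1,\tau}\sqrt{s_1}\bigl(3\alpha+2(1-\alpha)\|\theta_{\star,1}\|_\infty\bigr),
\]
and solves it algebraically: dividing through and rearranging gives $\bigl(c_1-c_2/\|\theta_{\star,1}\|_2\bigr)\normfro{\Delta}\leq 2c_2$, and the sample-size condition (\ref{eq:cond:T:1}) alone ensures $c_1\geq 2c_2/\|\theta_{\star,1}\|_2$, so $\normfro{\Delta}\leq 4c_2/c_1$. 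This avoids the ball argument entirely, does not touch (\ref{eq:tech:cond:thm2}), and produces the constant $A=16\times20\times\sqrt{48}$ on the nose rather than after ``careful tracking'' through the slack of the sphere construction. Your approach buys you a familiar template that transfers verbatim to other losses; the paper's buys a one-line algebraic finish with tighter hypotheses.
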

\begin{proof}
Fix $j\in \{1,2\}$, and $\tau\in\mathcal{T}$. Set $\bar g_{j,\tau}(\theta) \eqdef g_{j,\tau}(\theta) +(1-\alpha)\lambda_{j,\tau}\normfro{\theta}/2$, and  recall that $\phi_{j,\tau}(\theta) \eqdef  g_{j,\tau}(\theta) + \lambda_{j,\tau}\wp(\theta)$. Hence $\phi_{j,\tau}(\theta) = \bar g_{j,\tau}(\theta) +\alpha\lambda_{j,\tau}\|\theta\|_1$. By a very standard argument that can be found for instance in \cite{negahbanetal10}, it is known that on the event $\e_n$, and if $\alpha$ satisfies (\ref{eq:cond:alpha}) then we have $\hat\theta_{j,\tau} -\theta_{\star,j}\in \Cset_j$, where the cones $\Cset_j$ are as defined in (\ref{def:Cj}). We write
\begin{eqnarray*}
\phi_{j,\tau}(\hat\theta_{j,\tau})- \phi_{j,\tau}(\theta_{\star,j})
 & = & \pscal{\nabla g_{j,\tau}(\theta_{\star,j}) +(1-\alpha)\lambda_{j,\tau}\theta_{\star,j}}{\hat\theta_{j,\tau} - \theta_{\star,j}} \\
&& + \bar g_{j,\tau}(\hat\theta_{j,\tau}) - \bar g_{j,\tau}(\theta_{\star,j}) -\pscal{\nabla \bar g_{j,\tau}(\theta_{\star,j})}{\hat\theta_{j,\tau} - \theta_{\star,j}}\\
&& + \alpha\lambda_{j,\tau}\left(\|\hat\theta_{j,\tau}\|_1 - \|\theta_{\star,j}\|_1\right).
\end{eqnarray*}
On $\e_n$,  $\hat\theta_{j,\tau} -\theta_{\star,j}\in \Cset_j$. Therefore
\[\alpha\lambda_{j,\tau}\left|\|\hat\theta_{j,\tau}\|_1 - \|\theta_{\star,j}\|_1\right| \leq \alpha\lambda_{j,\tau}\left\|\hat\theta_{j,\tau}-\theta_{\star,j}\right\|_1 \leq 8\alpha\lambda_{j,\tau}\sqrt{s_j}\normfro{\hat\theta_{j,\tau}-\theta_{\star,j}},\]
and
\begin{multline*}
\left|\pscal{\nabla g_{j,\tau}(\theta_{\star,j}) +(1-\alpha)\lambda_{j,\tau}\theta_{\star,j}}{\hat\theta_{j,\tau} - \theta_{\star,j}} \right|\\
\leq \frac{\lambda_{j,\tau}}{2}\left(\alpha +2(1-\alpha)\|\theta_{\star,j}\|_\infty\right)\left\|\hat\theta_{j,\tau}-\theta_{\star,j}\right\|_1\\
\leq 4\lambda_{j,\tau}\left(\alpha +2(1-\alpha)\|\theta_{\star,j}\|_\infty\right)\sqrt{s_j}\normfro{\hat\theta_{j,\tau}-\theta_{\star,j}}.
\end{multline*}
Suppose $j=1$. The case $j=2$ is similar. We then set $\Delta_{1,\tau}\eqdef \hat\theta_{1,\tau}-\theta_{\star,1}$, and use the second part of Lemma \ref{lem2}~(1) to deduce that
\begin{multline*}
\bar g_{1,\tau}(\hat\theta_{1,\tau}) - \bar g_{1,\tau}(\theta_{\star,1}) -\pscal{\nabla \bar g_{1,\tau}(\theta_{\star,1})}{\hat\theta_{1,\tau} - \theta_{\star,1}} \\
\geq g_{1,\tau}(\hat\theta_{1,\tau}) - g_{1,\tau}(\theta_{\star,1}) -\pscal{\nabla g_{1,\tau}(\theta_{\star,1})}{\hat\theta_{1,\tau} - \theta_{\star,1}}\\
\geq \frac{\tau}{2T} \frac{\|\Delta_{1,\tau}\|_{\textsf{F}}^2}{2\|\theta_{\star,1}\|_2\left(2\|\theta_{\star,1}\|_2+ \|\Delta_{1,\tau}\|_{\textsf{F}}\right)}.
\end{multline*}
Set $c_1= \frac{\tau}{4T\|\theta_{\star,1}\|_2^2}$, $c_2 =4\lambda_{1,\tau}\sqrt{s_1}\left(3\alpha + 2(1-\alpha)\|\theta_{\star,1}\|_\infty\right)$. Since $\phi_{1,\tau}(\hat\theta_{1,\tau})- \phi_{1,\tau}(\theta_{\star,1})\leq 0$, the above derivation shows that on the event $\e_n$,
\[\frac{c_1\normfro{\Delta_{1,\tau}}^2}{2+\frac{1}{\|\theta_{\star,1}\|_2}\normfro{\Delta_{1,\tau}}} -c_2\normfro{\Delta_{1,\tau}} \leq 0,\]
Under the assumption that $c_1\geq 2c_2/\|\theta_{\star,1}\|_2$ (which we impose in (\ref{eq:cond:T:1})), this implies that
\[\normfro{\Delta_{1,\tau}} \leq \frac{4c_2}{c_1} \leq A \bar\kappa \|\theta_{\star,1}\|_2^2\sqrt{\frac{s_1\log(pT)}{\tau}},\]
where $A = 16\times 20\times \sqrt{48}$, as claimed.
\end{proof}

\begin{proof}[Proof of Theorem \ref{thm2}]
For $\tau\in\mathcal{T}$, let
\[r_{1,\tau} \eqdef A \bar\kappa \|\theta_{\star,1}\|_2^2\sqrt{\frac{s_1\log(pT)}{\tau}},\;\;\;r_{2,\tau} \eqdef A \bar\kappa \|\theta_{\star,2}\|_2^2\sqrt{\frac{s_2\log(pT)}{T-\tau}},\]
be  the convergence rates obtained in Lemma \ref{lem2:thm2}. Let $\epsilon>0$ be such that 
\[\epsilon \leq \min_{\tau\in\mathcal{T}} (r_{1,\tau}\wedge r_{1,\tau}).\]

 For $j=1,2$, let $\theta_j\in\M_p^+$ be such that $\|\theta_j-\hat\theta_{\tau,j}\|_1\leq \epsilon$. Set $\check\tau =\textsf{Argmin}_{t\in\mathcal{T}}\; \mathcal{H}(t\vert \theta_1,\theta_2)$, where $\mathcal{H}$ is as defined in (\ref{def:H}). Set 
 \[C_0 =  \min\left[\frac{\|\theta_{\star,2}-\theta_{\star,1}\|_{\textsf{F}}^4}{128 B^4\|\theta_{\star,2}-\theta_{\star,1}\|_1^2},\left(\frac{\underline{\kappa}}{\bar\kappa}\right)^4\right].\]
 We will show below that 
 \begin{equation}\label{thm2:eq0}
 \PP\left(|\check\tau-\tau_\star|>\frac{4\log(p)}{C_0}\right)\leq \frac{8}{pT} +\frac{4}{p^2\left(1-e^{-C_0}\right)}.\end{equation}
 This implies that with probability at least $1-\frac{8}{pT} - \frac{4}{p^2\left(1-e^{-C_0}\right)}$, Assumption H\ref{H1} holds (with $\epsilon\leftarrow \epsilon/\sqrt{p}$, $\kappa=0$, and $c=(4/C_0)\log(p)$). The theorem then follows by applying Theorem \ref{thm1}.

Given $\theta_j\in\M_p^+$ be such that $\|\theta_j-\hat\theta_{\tau,j}\|_1\leq \epsilon$, we will now show that (\ref{thm2:eq0}) holds.  We shall bound $\PP(\check\tau>\tau_\star+ \delta)$, $\delta=(4/C_0)\log(p)$. The bound on $\PP(\check\tau<\tau_\star-\delta)$ follows similarly by working with the reversed sequence $X^{(T)},\ldots,X^{(1)}$.

Note that $\theta_j$ can be written as
\begin{equation}\label{decomp:theta}
\theta_j = (\theta_j-\hat\theta_{\tau,j}) +(\hat\theta_{\tau,j}-\theta_{\star,j}) + \theta_{\star,j}.\end{equation}
This implies that on $\e_n$, for $\epsilon\leq r_{j,\tau}$, and $r_{j,\tau}\leq\min\left(\frac{ \lambda_{\textsf{min}}(\theta_{\star,j})}{4},\frac{\|\theta_{\star,j}\|_\infty}{2},\frac{\|\theta_{\star,j}\|_1}{1+8s_j^{1/2}}\right)$, we have
\begin{multline}\label{bound:spec:theta}
\lambda_{\textsf{min}}(\theta_j) \geq \frac{1}{2}\lambda_{\textsf{min}}(\theta_{\star,j}),\;\;\;\; \lambda_{\textsf{max}}(\theta_j) \leq 2\lambda_{\textsf{max}}(\theta_{\star,j}),\\
\|\theta_j\|_\infty \leq 2\|\theta_{\star,j}\|_\infty,\;\;\;\mbox{ and }\;\;\; \|\theta_j\|_1 \leq 2\|\theta_{\star,j}\|_1.\end{multline} 
Using the event $\e_n$  introduced in (\ref{def:en}), we have
\begin{multline}\label{thm2:eq1}
\PP\left(\check\tau>\tau_\star + \delta\right)  \leq    \PP(\e_n^c) + \sum_{j\geq 0:\; \tau_\star+\delta+j\in\mathcal{T}}\PP\left(\e_n,\; \check\tau = \tau_\star +\delta+j\right) \\
 \leq  \PP(\e_n^c) + \sum_{j\geq 0:\; \tau_\star+\delta+j\in\mathcal{T}}\PP\left(\e_n,\; \phi_{1,\tau_\star +\delta+j}(\theta_1) +\phi_{2,\tau_\star+\delta+j}(\theta_2)\leq \phi_{1,\tau_\star}(\theta_1) +\phi_{2,\tau_\star}(\theta_2)\right),
\end{multline}
where $\phi_{j,\tau}(\theta) \eqdef  g_{j,\tau}(\theta) + \lambda_{j,\tau}\wp(\theta)$. First we are going to bound the probability 
\[\PP\left(\e_n,\; \phi_{1,\tau}(\theta_1) +\phi_{2,\tau}(\theta_2)\leq \phi_{1,\tau_\star}(\theta_1) +\phi_{2,\tau_\star}(\theta_2)\right),\]
for some arbitrary $\tau\in\mathcal{T}$, $\tau>\tau_\star$.  A simple calculation shows that 
\begin{multline*}
\frac{2T}{\tau-\tau_\star}\left[\phi_{1,\tau}(\theta_1) +\phi_{2,,\tau}(\theta_2) -\phi_{1,\tau_\star}(\theta_1) -\phi_{2,\tau_\star}(\theta_2)\right] = -\log\det(\theta_1)+\log\det(\theta_2) \\
+\pscal{\theta_1-\theta_2}{\theta_{\star,2}^{-1}}+\pscal{\theta_1-\theta_2}{\frac{1}{\tau-\tau_\star}\sum_{t=\tau_\star+1}^\tau \left(X^{(t)}X^{(t)'}-\theta_{\star,2}^{-1}\right)}\\
+2T\left(\frac{\lambda_{1,\tau}-\lambda_{1,\tau_\star}}{\tau-\tau_\star}\right)\left(\frac{1-\alpha}{2}\|\theta_1\|_{\textsf{F}}^2 +\alpha\|\theta_1\|_1\right) \\
+2T\left(\frac{\lambda_{2,\tau}-\lambda_{2,\tau_\star}}{\tau-\tau_\star}\right)\left(\frac{1-\alpha}{2}\|\theta_2\|_{\textsf{F}}^2 +\alpha\|\theta_2\|_1\right).
\end{multline*}
We have $2T\left(\frac{\lambda_{1,\tau}-\lambda_{1,\tau_\star}}{\tau-\tau_\star}\right)\left(\frac{1-\alpha}{2}\|\theta_1\|_{\textsf{F}}^2 +\alpha\|\theta_1\|_1\right)\geq 0$, and
\[2T\left|\frac{\lambda_{2,\tau}-\lambda_{2,\tau_\star}}{\tau-\tau_\star}\right| \leq \frac{\bar\kappa}{\alpha} \sqrt{\frac{48\log(pT)}{T-\tau}} = \frac{c_0r_{2,\tau}}{\alpha s_2^{1/2}\|\theta_{\star,2}\|_2^2},\]
for some absolute constant $c_0$. Using the infinity-norm and $1$-norm bounds in (\ref{bound:spec:theta}) together with (\ref{eq:cond:alpha}), we have 
\[\frac{1-\alpha}{2}\|\theta_2\|_{\textsf{F}}^2 +\alpha\|\theta_2\|_1 =\alpha\left[\frac{1-\alpha}{2\alpha}\|\theta_2\|_{\infty} +1\right]\|\theta_2\|_1 \leq 4\alpha \|\theta_{\star,2}\|_1,\]
and it follows that
\[2T\left|\frac{\lambda_{2,\tau}-\lambda_{2,\tau_\star}}{\tau-\tau_\star}\right|\left(\frac{1-\alpha}{2}\|\theta_2\|_{\textsf{F}}^2 +\alpha\|\theta_2\|_1\right) \leq C_\tau\eqdef \left(\frac{4c_0\|\theta_{\star,2}\|_1}{s_2^{1/2}\|\theta_{\star,2}\|_2^2}\right)r_{2,\tau}.\]
Set
\[b \eqdef \min\left(\lambda_{\textsf{min}}(\theta_{\star,1}),\lambda_{\textsf{min}}(\theta_{\star,2})\right),\;\; B \eqdef \max\left(\|\theta_{\star,1}\|_2,\|\theta_{\star,2}\|_2\right).\] 
By the strong convexity of $\log\det$ (Lemma \ref{lem2}~Part(1)) we have:
\begin{multline*}
-\log\det(\theta_1)+\log\det(\theta_2) +\pscal{\theta_1-\theta_2}{\theta_{\star,2}^{-1}} \\
\geq \pscal{\theta_{\star,2}^{-1} -\theta_2^{-1}}{\theta_1-\theta_2} +\frac{1}{2B^2}\|\theta_1-\theta_2\|_{\textsf{F}}^2.\end{multline*}
Since $\theta_{\star,2}^{-1} -\theta_2^{-1} = \theta_{\star,2}^{-1}(\theta_2-\theta_{\star,2})\theta_2^{-1}$,  and using the fact that $\|AB\|_{\textsf{F}} \leq \|A\|_2\|B\|_{\textsf{F}}$, we have that on $\e_n$,
\begin{multline*}\left|\pscal{\theta_{\star,2}^{-1} -\theta_2^{-1}}{\theta_1-\theta_2}\right| \leq 2r_{2,\tau}\|\theta_{\star,2}^{-1}\|_{2} \|\theta_2^{-1}\|_2\|\theta_2-\theta_1\|_{\textsf{F}}\leq 4r_{2,\tau}\|\theta_{\star,2}^{-1}\|_{2}^2\|\theta_2-\theta_1\|_{\textsf{F}}. \end{multline*}
We conclude that on $\e_n$,
\begin{multline*}
\frac{2T}{\tau-\tau_\star}\left[\phi_{1,\tau}(\theta_1) +\phi_{2,\tau}(\theta_2) -\phi_{1,\tau_\star}(\theta_1) -\phi_{2,\tau_\star}(\theta_2)\right] \geq \\
\pscal{\theta_1-\theta_2}{\frac{1}{\tau-\tau_\star}\sum_{t=\tau_\star+1}^\tau \left(X^{(t)}X^{(t)'}-\theta_{\star,2}^{-1}\right)} \\
-C_\tau -4r_{2,\tau}\|\theta_{\star,2}^{-1}\|_2^2 \|\theta_2-\theta_1\|_{\textsf{F}} +\frac{1}{2B^2}\|\theta_1-\theta_2\|_{\textsf{F}}^2.
\end{multline*}
Under the assumption (\ref{eq:tech:cond:thm2}) imposed on $r_{j,\tau}$ and for $\epsilon\leq r_{1,\tau}\wedge r_{2,\tau}$, it can be shown that on $\e_n$, and for $\|\theta_{\star,2} -\theta_{\star,1}\|_{\textsf{F}} \geq \frac{8c_0\|\theta_{\star,2}\|_1}{s_2^{1/2} \|\theta_{\star,2}\|_2^2\|\theta_{\star,2}^{-1}\|_2^2}$, we have 
\begin{equation}\label{eq:quad}
-C_\tau -2\left(\epsilon + r_{2,\tau}\right)\|\theta_{\star,2}^{-1}\|_{2}^2 \|\theta_2-\theta_1\|_{\textsf{F}} +\frac{1}{4B^2}\|\theta_1-\theta_2\|_{\textsf{F}}^2\geq 0.\end{equation}
To see this, note that (\ref{eq:quad}) holds if $\|\theta_2-\theta_1\|_{\textsf{F}} \geq 8B^2 r_{2,\tau}\|\theta_{\star,2}^{-1}\|_2^2 +2B\sqrt{C_\tau + 16 B^2\|\theta_{\star,2}^{-1}\|_2^4 r_{2,\tau}^2}$. Then it can be checked that if $r_{2,\tau} \leq \frac{c_0\|\theta_{\star,2}\|_1}{16B^2s_2^{1/2}\|\theta_{\star,2}\|_2^2\|\theta_{\star,2}^{-1}\|_2^4}$, then 
\[8B^2\|\theta_{\star,2}^{-1}\|_2^2 r_{2,\tau}\leq \frac{C_\tau}{2\|\theta_{\star,2}^{-1}\|_2^2 r_{2,\tau}},\;\;\;\mbox{ and }\;\;\; 4B\sqrt{C_\tau}\leq \frac{C_\tau}{2\|\theta_{\star,2}^{-1}\|_2^2 r_{2,\tau}}.\]
Therefore, (\ref{eq:quad}) holds if 
\[ \|\theta_2-\theta_1\|_{\textsf{F}} \geq \frac{C_\tau}{\|\theta_{\star,2}^{-1}\|_2^2 r_{2,\tau}} = \frac{4c_0\|\theta_{\star,2}\|_1}{s_2^{1/2}\|\theta_{\star,2}\|_2^2\|\theta_{\star,2}^{-1}\|_2^2}.\]
Now we write
\[\theta_2-\theta_1 = (\theta_2-\hat\theta_{\tau,2}) + (\hat\theta_{\tau,2}-\theta_{\star,2}) + (\theta_{\star,2}-\theta_{\star,1}) + (\theta_{\star,1}-\hat\theta_{\tau,1}) +(\hat\theta_{\tau,1} -\theta_1),\]
and use the fact that $\epsilon\leq r_{1,\tau}\wedge r_{2,\tau}$, and $r_{j,\tau}\leq \|\theta_{\star,2}-\theta_{\star,1}\|_{\textsf{F}}/8$ to deduce that on $\e_n$, $\|\theta_2-\theta_1\|_{\textsf{F}} \geq  \|\theta_{\star,2}-\theta_{\star,1}\|_{\textsf{F}}/2$, and this completes the proof of the claim.

It follows from the above that 
\begin{multline}\label{thm2:eq2}
\PP\left(\e_n;\phi_{1,\tau}(\theta_1) +\phi_{2,\tau}(\theta_2) -\phi_{1,\tau_\star}(\theta_1) -\phi_{2,\tau_\star}(\theta_2) \leq 0\right)\\
\leq \PP\left(\left\|\frac{1}{\tau-\tau_\star}\sum_{t=\tau_\star+1}^\tau \left(X^{(t)}X^{(t)'}-\theta_{\star,2}^{-1}\right)\right\|_\infty>\frac{\|\theta_2-\theta_1\|_{\textsf{F}}^2}{4B^2\|\theta_2-\theta_1\|_1}\right).\end{multline}
Proceeding as above, it is easy to see that if $\epsilon\leq r_{1,\tau}\wedge r_{2,\tau}$, and $r_{j,\tau}\leq \frac{\|\theta_{\star,2}-\theta_{\star,1}\|_{\textsf{F}}}{2(1+8s^{1/2})}$, then
\[\frac{\|\theta_2-\theta_1\|_{\textsf{F}}^2}{4B^2\|\theta_2-\theta_1\|_1} \geq \frac{\|\theta_{\star,2}-\theta_{\star,1}\|_{\textsf{F}}^2}{32B^2\|\theta_{\star,2}-\theta_{\star,1}\|_1}.\]
Using this, and by Lemma \ref{lem1:thm2}, it follows that the probability on the right-hand side of (\ref{thm2:eq2}) is upper-bounded by 
\[4p^2\exp\left(-(\tau-\tau_\star)\min\left[\frac{\|\theta_{\star,2}-\theta_{\star,1}\|_{\textsf{F}}^4}{128 B^4\|\theta_{\star,2}-\theta_{\star,1}\|_1^2},\left(\frac{\underline{\kappa}}{\bar\kappa}\right)^4\right]\right).\]
We apply this to (\ref{thm2:eq1}) to get:
\[\PP(\check\tau>\tau_\star +\delta) \leq \PP(\e_n^c) + \sum_{j\geq 0} 4p^2e^{-C_0(\delta+j)}\leq\frac{8}{pT} +\frac{4}{p^2(1-e^{-C_0})},\]
where $C_0 =  \min\left[\frac{\|\theta_{\star,2}-\theta_{\star,1}\|_{\textsf{F}}^4}{128 B^4\|\theta_{\star,2}-\theta_{\star,1}\|_1^2},\left(\frac{\underline{\kappa}}{\bar\kappa}\right)^4\right]$, and by taking $\delta = 4\log(p)/C_0$. This completes the proof.

\end{proof}

\bibliographystyle{ims}
\bibliography{biblio_graph,biblio_optim,biblio_data}

\end{document}